\newtheorem{theorem}{Theorem}[section]
\newtheorem{corollary}[theorem]{Corollary}
\newtheorem{lemma}[theorem]{Lemma}
\newtheorem{definition}[theorem]{Definition}
\newtheorem{claim}[theorem]{Claim}
\newtheorem{remark}[theorem]{Remark}
\newenvironment{proof}{\noindent{\bf Proof : \ }}{\hfill$\Box$\par\medskip}
\pgfplotsset{compat=1.5}
\newcommand{\stkout}[1]{\ifmmode\text{\sout{\ensuremath{#1}}}\else\sout{#1}\fi}
\newtheorem{construction}[theorem]{Construction}
\newenvironment{proofof}[1]{\begin{trivlist} \item {\bf Proof
#1:~~}}
  {\qed\end{trivlist}}
\newcommand{\namedref}[2]{\hyperref[#2]{#1~\ref*{#2}}}
\newcommand{\thmlab}[1]{\label{thm:#1}}
\newcommand{\thmref}[1]{\namedref{Theorem}{thm:#1}}
\newcommand{\lemlab}[1]{\label{lem:#1}}
\newcommand{\lemref}[1]{\namedref{Lemma}{lem:#1}}
\newcommand{\claimlab}[1]{\label{claim:#1}}
\newcommand{\claimref}[1]{\namedref{Claim}{claim:#1}}
\newcommand{\corlab}[1]{\label{cor:#1}}
\newcommand{\corref}[1]{\namedref{Corollary}{cor:#1}}
\newcommand{\remlab}[1]{\label{rem:#1}}
\newcommand{\figlab}[1]{\label{fig:#1}}
\newcommand{\figref}[1]{\namedref{Figure}{fig:#1}}
\newcommand{\alglab}[1]{\label{alg:#1}}
\renewcommand{\algref}[1]{\namedref{Algorithm}{alg:#1}}
\newcommand{\deflab}[1]{\label{def:#1}}
\tikzset{
    position/.style args={#1:#2 from #3}{
        at=(#3.#1), anchor=#1+180, shift=(#1:#2)
    }
}
\definecolor{purduedigitalheadlinegold}{HTML}{98700D}
\definecolor{purduecampusgold}{HTML}{C28E0E}
\definecolor{purduecoalgray}{HTML}{4D4038}
\definecolor{purdueevertrueblue}{HTML}{5B6870}
\definecolor{purduemoondustgray}{HTML}{BAA892}
\definecolor{purdueslayterskyblue}{HTML}{6E99B4}
\definecolor{mahogany}{rgb}{0.75, 0.25, 0.0}
\definecolor{darkblue}{rgb}{0.0, 0.0, 0.55}
\definecolor{darkpastelgreen}{rgb}{0.01, 0.75, 0.24}
\definecolor{darkgreen}{rgb}{0.0, 0.2, 0.13}
\definecolor{darkgoldenrod}{rgb}{0.72, 0.53, 0.04}
\definecolor{forestgreen}{rgb}{0.13, 0.55, 0.13}
\definecolor{darkred}{rgb}{0.55, 0.0, 0.0}
\definecolor{tangocolordarkchameleon}{HTML}{4E9A06}
\newcommand{\COMMENTED}[1]{{}}
\newcommand{\nbr}[1]{\ensuremath{\mathsf{N}(#1)}}
\def \indeg    {\mdef{\textsf{indeg}}}
\def \path    {\mdef{\textsf{PATH}}}
\def \parents    {\mdef{\textsf{parents}}}
\renewcommand{\O}[1]{\ensuremath{\mathcal{O}\left(#1\right)}}
\newcommand{\eps}{\epsilon}
\def \GetParentsEGS    {\mdef{\mathsf{GetParentsEGS}}}
\newcommand{\mdef}[1]{{\ensuremath{#1}}\xspace}  
\newcommand{\myfunc}[1]{\mdef{\mathsf{#1}}}      
\DeclareMathOperator*{\polylog}{polylog}
\newcommand{\superscript}[1]{\ensuremath{^{\mbox{\tiny{\textit{#1}}}}}\xspace}
\def \th {\superscript{th}}     
\def \polylog  {\mdef{\myfunc{polylog}\,}}             
\newcommand{\abs}[1]{\mdef{\left|#1\right|}}         
\newcommand{\flr}[1]{\mdef{\left\lfloor#1\right\rfloor}}              
\newcommand{\ceil}[1]{\mdef{\left\lceil#1\right\rceil}}               
\def \Trunc {\mdef{\textsc{Trunc}}}
\newcommand{\ignore}[1]{}
\newif\ifnotes\notestrue
\newcommand{\seunghoon}[1]{\textcolor{purple}{{\bf (Seunghoon:} {#1}{\bf ) }} \marginpar{\tiny\bf
             \begin{minipage}[t]{0.5in}
               \raggedright S:
            \end{minipage}}}            
\newcommand{\jeremiah}[1]{\textcolor{red}{{\bf (Jeremiah:} {#1}{\bf ) }} \marginpar{\tiny\bf
             \begin{minipage}[t]{0.5in}
               \raggedright S:
            \end{minipage}}}
\newcommand{\mike}[1]{\textcolor{blue}{{\bf (Mike:} {#1}{\bf ) }} \marginpar{\tiny\bf
             \begin{minipage}[t]{0.5in}
               \raggedright S:
            \end{minipage}}}
\newcommand{\samson}[1]{}
\newcommand{\jeremiah}[1]{}
\renewcommand*{\@fnsymbol}[1]{\textcolor{mahogany}{\ensuremath{\ifcase#1\or *\or \dagger\or \ddagger\or
 \mathsection\or \triangledown\or \mathparagraph\or \|\or **\or \dagger\dagger
   \or \ddagger\ddagger \else\@ctrerr\fi}}}
\providecommand{\email}[1]{\href{mailto:#1}{\nolinkurl{#1}\xspace}}
\definecolor{mahogany}{rgb}{0.75, 0.25, 0.0}
\definecolor{darkblue}{rgb}{0.0, 0.0, 0.55}
\definecolor{darkpastelgreen}{rgb}{0.01, 0.75, 0.24}
\definecolor{darkgreen}{rgb}{0.0, 0.2, 0.13}
\definecolor{darkgoldenrod}{rgb}{0.72, 0.53, 0.04}
\definecolor{darkred}{rgb}{0.55, 0.0, 0.0}
\author{
Jeremiah Blocki\thanks{Department of Computer Science, Purdue University, West Lafayette, IN. 
Email: \email{jblocki@purdue.edu}}
\and
Mike Cinkoske\thanks{Department of Computer Science, University of Illinois at Urbana-Champaign, Urbana, IL.
Email: \email{mjc18@illinois.edu}}
\and
Seunghoon Lee\thanks{Department of Computer Science, Purdue University, West Lafayette, IN.
Email: \email{lee2856@purdue.edu}}
\and 
Jin Young Son \thanks{Department of Computer Science, Purdue University, West Lafayette, IN. 
Email: \email{son74@purdue.edu}}
}
\begin{document}
\title{On Explicit Constructions of Extremely Depth Robust Graphs}
\date{\today}

\maketitle
\begin{abstract}
A directed acyclic graph $G=(V,E)$ is said to be $(e,d)$-depth robust if for every subset $S \subseteq V$ of $|S| \leq e$ nodes the graph $G-S$ still contains a directed path of length $d$. If the graph is $(e,d)$-depth-robust for any $e,d$ such that $e+d \leq (1-\eps)|V|$ then the graph is said to be $\epsilon$-extreme depth-robust. In the field of cryptography, (extremely) depth-robust graphs with low indegree have found numerous applications including the design of side-channel resistant Memory-Hard Functions, Proofs of Space and Replication and in the design of Computationally Relaxed Locally Correctable Codes. In these applications, it is desirable to ensure the graphs are locally navigable, i.e., there is an efficient algorithm $\mathsf{GetParents}$ running in time $\polylog |V|$ which takes as input a node $v \in V$ and returns the set of $v$'s parents. We give the first explicit construction of locally navigable $\epsilon$-extreme depth-robust graphs with indegree $O(\log |V|)$. Previous constructions of $\epsilon$-extreme depth-robust graphs either had indegree $\tilde{\omega}(\log^2 |V|)$ or were not explicit.    
\end{abstract}
\section{Introduction}
A depth-robust graph $G=(V,E)$ is a directed acyclic graph (DAG) which has the property that for any subset $S \subseteq V$ of at most $e$ nodes the graph $G-S$ contains a directed path of length $d$, i.e., there is a directed path $P=v_0,\ldots, v_d$ such that $(v_i,v_{i+1}) \in E$ for each $i <d$ and $v_i \in V \setminus S$ for each $i \leq d$. As an example, the complete DAG $K_N=(V=[N], E=\{(i,j) : 1\leq i < j \leq n\}$ has the property that it is $(e,d)$-depth-robust for any integers $e,d$ such that $e+d \leq N$. Depth-robust graphs have found many applications in cryptography including the design of data-independent Memory-Hard Functions~(e.g.,~\cite{C:AlwBlo16,EC:AlwBloPie17}), Proofs of Space~\cite{C:DFKP15}, Proofs of Replication~\cite{ITCS:Pietrzak19a,EC:Fisch19} and Computationally Relaxed Locally Correctable Codes~\cite{ITIT:BGGZ21}. In many of these applications it is desirable to construct depth-robust graphs with low-indegree (e.g., $\indeg(G)=O(1)$ or $\indeg(G)=O(\log N)$) and we also require that the graphs are \emph{locally navigable}, i.e., given any node $v \in V=[N]$ there is an efficient algorithm $\mathsf{GetParents}(v)$ which returns the set $\{u : (u,v) \in E\}$ containing all of $v$'s parent nodes in time $O(\polylog N)$. It is also desirable that the graph is $(e,d)$-depth robust for $e,d$ as large as possible, e.g., the cumulative pebbling cost of a graph can be lower bounded by the product $ed$ and in the context of Memory-Hard Functions we would like to ensure that the cumulative pebbling cost is as large as possible~\cite{STOC:AlwSer15,EC:AlwBloPie17}. Some cryptographic constructions rely on an even stronger notion  called \emph{$\epsilon$-extreme depth-robust graphs} $G=(V,E)$ which have the property of being $(e,d)$-depth-robust for any integers $e,d$ such that $e+d \leq (1-\epsilon)N$, e.g., see \cite{ITCS:Pietrzak19a,ITCS:MahMorVad13}. 

Erd\"os, Graham, and Szemeredi \cite{EGS75} gave a randomized construction of $(e,d)$-depth-robust graphs with $e,d=\Omega(N)$ and maximum indegree $O(\log N)$. Alwen, Blocki, and Harsha~\cite{CCS:AlwBloHar17} modified this construction to obtain a locally navigable construction of $(e,d)$-depth-robust graphs with constant indegree $2$ for $e=\Omega(N/\log N)$ and $d=\Omega(N)$. For any constant $\epsilon>0$, Schnitger~\cite{FOCS:Schnitger83} constructed $(e=\Omega(N), d= \Omega(N^{1-\epsilon}))$-depth-robust graphs with constant indegree --- the indegree $\indeg(G)$ does increase as $\epsilon$ gets smaller. These results are essentially tight as {\em any} DAG $G$ \replaced{which}{with} is $\left( \frac{N \cdot i \cdot \indeg(G) }{\log N} , \frac{N}{2^i} \right)$-reducible\footnote{If a DAG $G$ is not $(e,d)$-depth-robust we say that it is $(e,d)$-reducible, i.e., there exists some set $S \subseteq V$ \added{of size $e$} such that $G-S$ contains no directed path of length $d$.} for any $i \geq 1$~\cite{C:AlwBlo16,Valiant77}. If $\indeg(G)=o(\log N)$ then the graph cannot be $(e,d)$-depth robust with $e,d = \Omega(N)$ and similarly if $\indeg(G)=\Theta(1)$ plugging in $i=O(\log \log N)$  demonstrates that $G$ cannot be $(e=\omega(N \log \log N/\log N), d= \omega(N))$-depth-robust. 

\paragraph*{Explicit Depth-Robust Graphs.} All of the above constructions are randomized and do not yield explicit constructions of depth-robust graphs. For example, the DRSample construction of \cite{CCS:AlwBloHar17} actually describes a randomized distribution over graphs and proves that a graph sampled from the distribution is $(e,d)$-depth-robust with high probability. Testing whether a graph is actually $(e,d)$-depth-robust is computationally intractable \cite{FC:BloZho18,ITCS:BloLeeZho20} so we cannot say that a particular sampled graph is depth-robust with $100\%$ certainty. In fact, it might be possible for a dishonest party to build a graph $G=(V,E)$ which looks like an honestly sampled depth-robust graph but actually contains a small (secret) depth-reducing set $S \subseteq V$, i.e., such that $G-S$ does not contain any long paths. Thus, in many cryptographic applications one must assume that the underlying depth-robust graphs \replaced{were}{was} generated honestly.

Li \cite{Li2019} recently gave an explicit construction of constant-indegree depth-robust graphs, i.e., for any $\epsilon >0$, Li constructs a family of graphs $\{G_{N,\epsilon}\}$ such that each $G_{N,\epsilon}$ has  $N$ nodes, constant indegree, and is $(\Omega(N^{1-\epsilon}),\Omega(N^{1-\epsilon}))$-depth-robust. The construction of Li \cite{Li2019} is also locally navigable, but the graphs are not as depth-robust as we would like. Mahmoody, Moran, and Vadhan~\cite{ITCS:MahMorVad13} gave an explicit construction of an $\epsilon$-extreme depth-robust graph for any constant $\epsilon>0$ using the Zig-Zag Graph Product constructions of \cite{FOCS:ReiVadWig00}. However, the maximum indegree is as large as $\indeg(G) \leq \log^3 N$. Alwen, Blocki, and Pietrzak~\cite{EC:AlwBloPie18} gave a tighter analysis of \cite{EGS75} \deleted{yields }showing that the randomized construction of \cite{EGS75} yields $\epsilon$-extreme depth-robust graphs with $\indeg(G)=O(\log N)$ although their randomized construction is not explicit nor was the graph shown to be locally navigable. 

\subsection{Our Contributions} We give explicit constructions of $\epsilon$-extreme depth-robust graphs with maximum indegree $O(\log N)$ for any constant $\epsilon >0$ and we also give explicit constructions of $( e= \Omega(N/\log N), \allowbreak d= \Omega(N))$-depth-robust graphs with maximum indegree $2$. Both constructions are explicit and locally navigable. In fact, our explicit constructions also satisfy a stronger property of being $\delta$-local expanders. A $\delta$-local expander is a directed acyclic graph $G$ which has the following property:  for any $r, v \geq 0$ and any subsets  $X \subseteq  A=[v,v+r-1]$ and $Y\subseteq B = [v+r,v+2r-1]$ of at least $|X|,|Y| \geq \delta r$ nodes the graph $G$ contains an edge $(x,y)$ with $x \in X$ and $y \in Y$.  We remark that the construction of Computationally Relaxed Locally Correctable Codes \cite{ITIT:BGGZ21} relies on a family of $\delta$-local expanders which is a strictly stronger property than depth-robustness --- \added{for any $\epsilon>0$, there exists a constant $\delta>0$ such that any} $\delta$-local expander\deleted{s are also } automatically \added{becomes} $\epsilon$-extreme depth-robust~\cite{EC:AlwBloPie18}.

\subsection{Our Techniques} We first provide explicit, locally navigable, constructions of $\delta$-bipartite expander graphs with constant indegree for any constant $\delta>0$. A bipartite graph $G=((A,B),E)$ with $|A|=|B|=N$ is a $\delta$-\replaced{bipartite}{local} expander if for {\em any} $X \subseteq A$ and $Y \subseteq B$ of size $|X|,|Y| \geq \delta N$ the bipartite graph $G$ contains at least one edge $(x,y) \in E$ with $x \in X$ and $y \in Y$. The notion of a $\delta$-bipartite expander is related to, but distinct from, classical notions of a graph expansion, e.g., we say that $G$ is a\added{n} $(N,k,d)$-expander if $\indeg(G)\leq k$ and for every subset $X \subseteq A$  (resp. $Y \subseteq B$) \deleted{of size $|A| \leq N/2$ }we have $|\replaced{\nbr{X}}{N(X)}| \geq (1+d-d|X|/N) |X|$ (resp. $|\replaced{\nbr{Y}}{N(Y)}|\geq (1+d-d|Y|/N) |Y|$)\added{, where $\nbr{X}$ is defined to be all of the neighbors of $X$, i.e., $\nbr{X} \doteq \{y \in B: \exists x \in X~\mbox{s.t.}~(x,y) \in E\}$. (Notation: We use $\nbr{X}$ (resp. $N$) to denote the neighbors of nodes in $X$ (resp. number of nodes in a graph/bipartition).)} Erd\"os, Graham, and Szemeredi \cite{EGS75} argued that a random degree $k_\delta$ bipartite graph will be a $\delta$-\replaced{bipartite}{local} expander with non-zero probability where the constant $k_{\delta}$ depends only on $\delta$. As a building block, we rely on an explicit, locally navigable, construction of $(n=m^2, k=5, d= (2-\sqrt{3})/4)$-expander graphs for any integer $m$ due to Gabber and Galil~\cite{GabGal81}. For any constant $\delta >0$ we show how any $(N,k,d)$-expander graph $G$ with $d < 0.5$ and $k=\Theta(1)$ can be converted into a $\delta$-bipartite expander graph $G'$ with $N$ nodes and maximum indegree $\indeg(G') =\Theta(1)$. Intuitively, the construction works by ``layering\replaced{''}{"} $\ell = \Theta(1)$ copies of the  $(N,k,d)$-expander graph\added{s} and then ``compressing\replaced{''}{"} the layers to obtain a bipartite graph $G'$ with maximum indegree $k' \leq k^\ell$ --- paths from the bottom layer to the top layer are compressed to individual edges. 

The depth-robust graph construction of Erd\"os et al.~\cite{EGS75} uses $\delta$-bipartite expanders as a building block. By swapping out the randomized (non-explicit) construction of $\delta$-bipartite expanders with our explicit and locally navigable construction, we obtain a family of explicit and locally navigable depth-robust graphs. Furthermore, for any $\epsilon >0$ we can apply \added{the} analysis of Alwen et al. \cite{EC:AlwBloPie18} to obtain explicit constructions of $\epsilon$-extreme depth-robust graphs by \replaced{selecting}{select} the constant $\delta>0$ accordingly. Finally, we can apply a standard indegree reduction gadget of Alwen et al. \cite{EC:AlwBloPie17} to obtain an $\left( e= N/\log N, d=\Omega(N)\right)$-depth-robust graph with indegree $2$.

\section{Preliminaries}
We use $[N]=\{1,\ldots, N\}$ to denote the set of all integers between $1$ and $N$ and we typically use $V=[N]$ to denote the set of nodes in our graph. It is often convenient to assume that $N=2^n$ is a power of $2$. Given a graph $G=(V=[N],E)$ and a subset $S\subseteq [N]$ we use $G-S$ to denote the graph obtained by deleting all nodes in $S$ and removing any incident edges. Fixing a directed graph $G=(V=[N],E)$ and a node $v \in V$, we use $\parents(v)=\{u~:~(u,v) \in E\}$ to denote the parents of node $v$ and we let $\indeg(G) = \max_{v \in [N]} \left| \parents(v) \right|$ denote the maximum indegree of any node in $G$. We say a DAG $G$ is $(e,d)$-reducible if there exists a subset $S\subseteq [N]$ of $|S| \leq e$ nodes such that $G-S$ contains no directed path of length $d$. If $G$ is not $(e,d)$-reducible we say that $G$ is $(e,d)$-depth-robust. 

We introduce the notion of a $\delta$-bipartite expander graph where the concept was first introduced by \cite{EGS75} and used as a building block to construct depth-robust graphs. Note \added{that }the specific name ``$\delta$-bipartite expander'' was not used in \cite{EGS75}. We follow the notation of~\cite{CCS:AlwBloHar17,EC:AlwBloPie18}. 

\begin{definition}\deflab{delta-bipartite}
A directed bipartite graph $G=((A,B),E)$ with $|A|=|B|=N$ is called a \emph{$\delta$-bipartite expander} if and only if for any subset $X\subseteq A,Y\subseteq B$ of size $|X|\geq \delta N$ and $|Y|\geq \delta N$ there exists an edge between $X$ and $Y$.
\end{definition}

\begin{remark}\remlab{delta-bip-property}
Observe that if $G=((A,B),E)$ is a $\delta$-bipartite expander then for any subset $X\subseteq A$ with $|X|\geq\delta N$ we must have $|\replaced{\nbr{X}}{N(X)}|\replaced{>}{\geq} (1-\delta)N$ where $\replaced{\nbr{X}}{N(X)} = \{y \in B: \exists x \in X~\mbox{s.t.}~(x,y) \in E\}$ denotes the neighbors of $X$. If this were not the case then we could take $Y= B \setminus \replaced{\nbr{X}}{N(X)}$ and we have $|Y| \geq \delta N$ and, by definition of $Y$, we have no edges between $X$ and $Y$ contradicting the assumption that $G$ is a $\delta$-bipartite expander. 
\end{remark}

\begin{definition}\deflab{nkd-expander}
A directed bipartite graph $G=((A,B),E)$ with $|A|=|B|=N$ is called an \emph{$(\replaced{N}{n},k,d)$-expander} if $|E|\leq k\replaced{N}{n}$ and for every subset $X\subseteq A$ (resp. $Y \subseteq B$) we have $|\replaced{\nbr{X}}{N(X)}|\geq \left[ 1+d\left(1-\frac{|X|}{\replaced{N}{n}}\right)\right]|X|$ (resp. $|\replaced{\nbr{Y}}{N(Y)}|\geq \left[ 1+d\left(1-\frac{|Y|}{\replaced{N}{n}}\right)\right]|Y|$) where $\replaced{\nbr{X}}{N(X)}= \{ y \in B ~:~\exists x \in X\mbox{~s.t.}~(x,y) \in E \}$ (resp. $\replaced{\nbr{Y}}{N(Y)}= \{ x \in A ~:~\exists y \in B \mbox{~s.t.}~(x,y) \in E \}$).
\end{definition}

Gabber and Galil~\cite{GabGal81} gave explicit constructions of $(N=m^2,k=5,d=(2-\sqrt{3})/5)$-expanders. \lemref{nkd-then-delta} highlights the relationship between $\delta$-bipartite expanders and the more classical notion of $(N,k,d)$-expanders. 

\begin{lemma}\lemlab{nkd-then-delta}
Let $0 < d < 1$ and let $\delta= \frac{(d+2)-\sqrt{d^2+4}}{2d}$.  If a directed bipartite graph $G=((A,B),E)$ with $|A|=|B|=\replaced{N}{n}$ is an $(N,k,d)$-expander for $d<1$ then $G$ is a $\delta$-bipartite expander.
\end{lemma}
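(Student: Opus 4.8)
The plan is to argue by contradiction: a failure of the $\delta$-bipartite expander property immediately exhibits a large vertex set whose neighborhood is too small to be compatible with the $(N,k,d)$-expansion bound. So suppose $G=((A,B),E)$ is an $(N,k,d)$-expander but \emph{not} a $\delta$-bipartite expander, i.e.\ there are $X\subseteq A$ and $Y\subseteq B$ with $|X|,|Y|\geq \delta N$ and no edge between $X$ and $Y$. Since no edge joins $X$ to $Y$, every neighbor of $X$ lies in $B\setminus Y$, so $|\nbr{X}|\leq N-|Y|\leq (1-\delta)N$. On the other hand, writing $x=|X|/N\geq \delta$ and $f(x)=\left[1+d(1-x)\right]x=(1+d)x-dx^2$, the $(N,k,d)$-expansion hypothesis applied to $X$ gives $|\nbr{X}|\geq f(x)\,N$. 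Combining the two bounds yields $f(x)\leq 1-\delta$.

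The next step is a short one-variable analysis of $f$. It is a downward parabola with vertex at $x=(1+d)/(2d)$, and the hypothesis $d<1$ is exactly what forces $(1+d)/(2d)>1$, equivalently $f'(x)=(1+d)-2dx\geq 1-d>0$ on $[0,1]$; hence $f$ is strictly increasing on $[0,1]$. In particular $f(x)\geq f(\delta)$, so the first paragraph gives $f(\delta)\leq 1-\delta$. It remains to check that the $\delta$ in the statement is precisely the threshold where this becomes impossible. Solving $f(\delta)=1-\delta$ is the quadratic $d\delta^2-(d+2)\delta+1=0$, whose roots are $\frac{(d+2)\pm\sqrt{(d+2)^2-4d}}{2d}=\frac{(d+2)\pm\sqrt{d^2+4}}{2d}$. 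I would then verify that the smaller root $\frac{(d+2)-\sqrt{d^2+4}}{2d}$ lies in $(0,1)$ — using $(d+2)^2>d^2+4$ (i.e.\ $4d>0$) for positivity and $(2-d)^2<d^2+4$ (again $4d>0$) for the upper bound — so that this root equals the $\delta$ of the lemma and satisfies $f(\delta)=1-\delta$. Feeding this back, $1-\delta=f(\delta)\leq 1-\delta$, so every inequality in the chain must be an equality.

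The main obstacle is this equality (degenerate boundary) case: because $\delta$ is chosen so that $f(\delta)=1-\delta$ \emph{exactly}, we get a genuine contradiction only when $|X|>\delta N$ or $|Y|>\delta N$ strictly — in that case $|\nbr{X}|\leq N-|Y|<(1-\delta)N=f(\delta)N\leq f(|X|/N)N\leq |\nbr{X}|$ (using strictness in whichever coordinate is strict, together with strict monotonicity of $f$). When $|X|=|Y|=\delta N$ exactly one only deduces $\nbr{X}=B\setminus Y$ and, symmetrically, $\nbr{Y}=A\setminus X$. I would dispose of this either by noting that $|X|$ is an integer, so whenever $\delta N$ is not an integer we automatically have $|X|/N>\delta$ and strict monotonicity closes the argument; or, since all downstream uses only need the conclusion for sets of size strictly exceeding $\delta N$, by reading the statement with strict inequalities. (Alternatively, taking any $\delta$ slightly larger than $\frac{(d+2)-\sqrt{d^2+4}}{2d}$ makes $f(\delta)>1-\delta$ and removes the edge case outright.) Everything else — monotonicity of $f$ on $[0,1]$, the quadratic formula, and the two sign checks placing $\delta$ in $(0,1)$ — is routine computation.
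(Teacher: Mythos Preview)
Your proof is correct and follows essentially the same approach as the paper: both exploit the monotonicity of $f(x)=(1+d)x-dx^2$ on $[0,1]$ (guaranteed by $d<1$) together with the fact that $\delta$ is chosen to satisfy $f(\delta)=1-\delta$, the paper phrasing it directly and you by contradiction. You are in fact more careful than the paper about the boundary case $|X|=|Y|=\delta N$; the paper simply asserts the strict inequality $f(|X|)>f(\delta N)$ for $|X|\geq \delta N$ without addressing equality, so your discussion of this edge case is a genuine refinement rather than a gap.
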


\begin{proof}
Consider an arbitrary subset $X\subseteq A$ with $|X|\geq\delta N$ and let $Y = B \setminus \replaced{\nbr{X}}{N(X)}$. We want to argue that $|Y| < \delta N$ or equivalently $|\replaced{\nbr{X}}{N(X)}| > (1-\delta)N$. Without loss of generality, we may assume that $|X|<N$ (otherwise we have $\replaced{\nbr{X}}{N(X)} = B$ since $|\replaced{\nbr{X}}{N(X)}| \geq (1+d(1-|X|/N)) |X| = |X| = N$). Since $G$ is an $(\replaced{N}{n},k,d)$-expander, we have that $|\replaced{\nbr{X}}{N(X)}|\geq \left[ 1+d\left(1-\frac{|X|}{\replaced{N}{n}}\right)\right]|X| = -\frac{d}{\replaced{N}{n}}|X|^2 + (d+1)|X|$.  Hence, for $N > |X|\geq \delta N$, we have that
\begin{align*}
|\replaced{\nbr{X}}{N(X)}| &\geq -\frac{d}{N}|X|^2 + (d+1)|X| \\     &> -\frac{d}{N}(\delta N)^2 + (d+1)\delta N  \\      &\geq (1-\delta)N,
\end{align*}
where the middle inequality follows from the observation that when $d<1$, the function $f(x)=-\frac{d}{N}x^2+(d+1)x$ is an increasing function over the range $0\leq x\leq N$ and the last inequality follows from the choice of  $\delta= \frac{(d+2)-\sqrt{d^2+4}}{2d}$ since  $d\replaced{\geq}{\leq}\frac{1-2\delta}{\delta-\delta^2}$. Now fixing an arbitrary subset $Y \subseteq B$ with $|Y| \geq \delta N$ and setting $X = A \setminus \replaced{\nbr{Y}}{N(Y)}$, a symmetric argument shows that $|X| < \delta N$. Thus, $G$ is a $\delta$-bipartite expander. 
\end{proof}

\section{Explicit Constructions of \texorpdfstring{$\delta$}{Delta}-Bipartite Expanders}

In this section, we give an explicit (locally navigable) construction of a $\delta$-bipartite expander graph for any constant $\delta>0$. As a building block, we start with an explicit construction of $(N=m^2,k=5,d=(2-\sqrt{3})/4)$-expander due to Gabber and Galil \cite{GabGal81}. Applying \lemref{nkd-then-delta} above this gives us a $\delta$-bipartite expander with $\delta \approx 0.492$ whenever $N=m^2$. To construct depth-robust graphs we need to construct $\delta$-bipartite expanders for much smaller values of $\delta$ and for arbitrary values of $N$, i.e., not just when $N=m^2$ is a perfect square. We overcome the first challenge by layering the $(N=m^2,k,d)$-expanders of \cite{GabGal81} to obtain $\delta$-bipartite expanders for arbitrary constants $\delta>0$ --- the indegree increases as $\delta$ approaches $0$.  We overcome the second issues simply by truncating the graph, i.e., if $G$ is a $\delta/2$-bipartite expander with \replaced{$2N$}{$N$} nodes then we can discard up to $N/2$ sources and $N/2$ sinks and the remaining graph will still be a $\delta$-expander. 

\subsection{Truncation}
By layering the $(N,k,d)$-expanders of Gabber and Galil \cite{GabGal81} we are able to obtain a family $\{G_{m,\delta}\}_{m=1}^\infty$ of $\delta$-bipartite expanders for any constant $\delta>0$ such that $G_m$ has $N=m^2$ nodes on each side of the bipartition and constant indegree. However, our constructions of depth-robust graphs will require us to obtain a family $\{H_{N,\delta}\}_{N =1}^\infty$ of $\delta$-bipartite expanders such that $H_{N,\delta}$ has $N$ nodes on each side of the bipartition and constant indegree. In this section, we show how the family $\{H_{N,\delta}\}_{N=1}^\infty$ can be constructed by truncating graphs from the family $\{G_{m,\delta}\}_{m=1}^\infty$. Furthermore, if the construction of $G_{m,\delta}$ is explicit and locally navigable then so is $H_{N,\delta}$. 

For each $N$ we define $m(N):=\min_{m:m^2\geq N}$ to be the smallest positive integer $m$ such that $m^2 \geq N$. We first observe that for all integers $N \geq 1$ we have $m(N)^2 \geq N \geq m(N)^2/2$.

\begin{claim} \claimlab{claim:mn}
For all $N \geq 1$ we have $m(N)^2 \geq N \geq m(N)^2/2$.
\end{claim}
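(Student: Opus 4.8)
The plan is to handle the two inequalities separately. The left inequality $m(N)^2 \geq N$ is immediate from the definition of $m(N)$: it was chosen to be an integer $m$ satisfying $m^2 \geq N$ (and such an $m$ exists, e.g.\ $m = N$), so there is nothing to prove. All the content is in the right inequality $N \geq m(N)^2/2$, and the only tool available is the \emph{minimality} of $m(N)$.

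First I would dispose of the trivial case $N = 1$: here $m(1) = 1$, so $m(1)^2/2 = 1/2 \leq 1 = N$ and the claim holds. For $N \geq 2$ we have $m := m(N) \geq 2$ (since $1^2 = 1 < N$ rules out $m(N) = 1$), and by minimality the integer $m-1$ fails the defining condition, i.e.\ $(m-1)^2 < N$. As both sides are integers this strengthens to $N \geq (m-1)^2 + 1 = m^2 - 2m + 2$.

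It then remains to verify the purely algebraic inequality $m^2 - 2m + 2 \geq m^2/2$, which after multiplying by $2$ and rearranging becomes $m^2 - 4m + 4 \geq 0$, that is $(m-2)^2 \geq 0$ --- true for every real $m$, in particular every integer $m \geq 2$. Chaining $N \geq m^2 - 2m + 2 \geq m^2/2 = m(N)^2/2$ completes the argument. There is essentially no obstacle here; the only points requiring care are handling $N = 1$ (equivalently $m(N) = 1$, where the ``$m-1$'' step degenerates) and using the correct strict form of the minimality inequality, namely $(m(N)-1)^2 < N$ rather than $\leq N$.
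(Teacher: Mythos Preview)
Your proof is correct and follows essentially the same approach as the paper: both use the minimality of $m(N)$ to reduce to the worst case $N=(m-1)^2+1$ and then verify the resulting algebraic inequality $\frac{m^2}{(m-1)^2+1}\leq 2$. Your completion-of-squares argument $(m-2)^2\geq 0$ is in fact a bit cleaner than the paper's analysis of the rational function $f(m)=\frac{2(m-1)}{(m-1)^2+1}$, but the underlying idea is identical.
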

\begin{proof}
The fact that $m(N)^2 \geq N$ follows immediately from the definition of $m(N)$. For the second part it is equivalent to show that $m(N)^2/N \leq 2$ for all $N \geq 1$. The ratio $m(N)^2/N$ is maximized when $N=(m-1)^2+1$ for some $m\geq 1$. Thus, it suffices to show that $\frac{m^2}{(m-1)^2 + 1} \leq 2$ for all $m \geq 1$ or equivalently $1+\frac{2(m-1)}{(m-1)^2+1} \leq 2$. The function $f(m)=    \frac{2(m-1)}{(m-1)^2+1} $  is maximized at $m=2$ in which case $f(2)=1$.  For all $m \geq 2$ we have $1+\frac{2(m-1)}{(m-1)^2+1}\deleted{=2} \leq 2$ and when $m=1$ we have $ 1+\frac{2(m-1)}{(m-1)^2+1}=1 \leq 2$ so the claim follows. 
\end{proof}

Suppose that for any constant $\delta > 0$ we are given an explicit locally navigable family $\{G_{m,\delta}\}_{m=1}^\infty$ of $\delta$-bipartite expanders with $G_{m,\delta} = ((A_{m,\delta}=\{X_1,\ldots,X_{m^2}\},B_{m,\delta}=\{Y_1,\ldots, Y_{m^2}\}\added{)}, E_{m,\delta})$ with edge set $E_{m,\delta} = \{ \replaced{(X_i,Y_j)}{(a_i,b_j)} ~: ~ i \in \mathsf{GetParents}(m,\delta, j) \wedge j \leq m^2\}$ defined by an algorithm $\mathsf{GetParents}(m,\delta, j)$. We now define the algorithm $\mathsf{GetParentsTrunc}(N,\delta, j) = \mathsf{GetParents}(m(N),\delta/2, j) \cap \{1,\ldots ,N\}$ and we define  $H_{m,\delta} = ((A_{N,\delta}'=\{a_1,\ldots,a_{N}\},B_{N,\delta}'=\{b_1,\ldots, b_{N}\}\added{)}, E_{N,\delta}')$  with edge set $E_{N,\delta}' = \{ (a_i,b_j) ~: ~ i \in \mathsf{GetParentsTrunc}(N,\delta, j) \wedge j \leq N\}$. Intuitively, we start with a $\delta/2$-bipartite expander $G_{m,\delta/2}$ with $N'=m(N)^2$ nodes on each side of the partition and drop $N'-N \leq \replaced{N'/2}{N/2}$ nodes from each side of the bipartition to obtain $H_{m,\delta}$. Clearly, if $\mathsf{GetParents}$ can be evaluated in time $O(\polylog m)$ then $\mathsf{GetParentsTrunc}$ can be evaluated in time $O(\polylog N)$. Thus, the family  $\{H_{N,\delta}\}_{N =1}^\infty$ is explicit and locally navigable. Finally, we claim that  $H_{m,\delta}$ is a $\delta$-bipartite expander. 

\ignore{

Intuitively, we will start with $G_{m(n)^2}$ and discard $m(n)^2-n$ arbitrary input/output nodes to obtain a new graph $G_n$ with $n$ inputs and $n$ outputs. For the purposes of analysis it will be helpful to define $r(n):=(m(n))^2/n$. We observe that $\lim_{n\rightarrow\infty}r(n)=1$ which intuitively means that for larger $n$ the fraction of discarded nodes is tending towards $0$. Formally, this means that for any $\eps>0$, there exists a constant $n_{\eps}$ such that for any $n>n_{\eps}$ we have $\left|r(n)-1\right|\leq \eps$. \jeremiah{Define as GetParentsTrunc}

\algref{trunc} describes the procedure $\Trunc(G_{m(n)^2},n,\eps)$ which achieves a bipartite graph $G_n$ on $n$ input and $n$ output nodes with maximum indegree $\max\{n_{\eps},\indeg(G_{m(n)^2})\}=\O{1}$. Whenever $n > n_{\eps}$ it is acceptable to delete $m(n)^2-n$ input/output nodes. Otherwise, if $n \leq n_{\epsilon}$ then the graph has a constant number of inputs/outputs and we simply return the complete bipartite graph with $n$ inputs/outputs. We prove that such graph is also a $2\delta$-bipartite expander in \lemref{drop}.

\begin{algorithm}
    \caption{ $\Trunc(G_{m(n)^2},n,\eps)$}
    \alglab{trunc}
    \begin{algorithmic}[1] 
        \Procedure{\textsf{Trunc}}{$G_{m(n)^2} = ((A=[a_1,\ldots,a_{m(n)^2}], B =[b_1,\ldots,b_{m(n)^2}]), E) ,n,\eps$} 
	\If {$n>n_{\eps}$} \Comment{\footnotesize{~$n>n_{\eps}\Rightarrow|r(n)-1|\leq\eps$}}
		\State $G_n \gets G_{m(n)^2} - \{a_{m(n)^2},\ldots, a_{n+1}, b_{m(n)^2},\ldots, b_{n+1}\}$
	         \Comment{\footnotesize{~Delete $m(n)^2-n$ input/output nodes}}
	\Else
                \State $G_n \gets K_{n,n}$ \Comment{\footnotesize{~Complete bipartite graph on $n$ input/output nodes}}
	\EndIf
            \State \textbf{return} $G_n$
        \EndProcedure
    \end{algorithmic}
\end{algorithm}
}

\begin{lemma}\lemlab{drop}
Assuming that $G_{m,\delta}$ is a $\delta$-bipartite expander for each $m\geq 1$ and $\delta>0$, the graph $H_{m,\delta}$ is a $\delta$-bipartite expander for each $m\geq 1$ and $\delta>0$. 
\end{lemma}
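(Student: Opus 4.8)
The plan is to verify \defref{delta-bipartite} for $H_{N,\delta}$ directly, reducing it to the $(\delta/2)$-bipartite expander property of $G_{m(N),\delta/2}$ together with \claimref{claim:mn}. The content is essentially the bookkeeping that hides in the factor of two, so I do not expect a serious obstacle.

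First I would fix an arbitrary pair of subsets $X \subseteq A'_{N,\delta} = \{a_1,\ldots,a_N\}$ and $Y \subseteq B'_{N,\delta} = \{b_1,\ldots,b_N\}$ with $|X| \geq \delta N$ and $|Y| \geq \delta N$, and the goal is to exhibit an edge of $E'_{N,\delta}$ between $X$ and $Y$. Identifying $a_i$ with $X_i$ and $b_j$ with $Y_j$, I view $X$ and $Y$ as subsets of the two sides $A_{m(N),\delta/2} = \{X_1,\ldots,X_{m(N)^2}\}$ and $B_{m(N),\delta/2} = \{Y_1,\ldots,Y_{m(N)^2}\}$ of the graph $G_{m(N),\delta/2}$, supported on the first $N$ indices.

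Next I would bound the sizes of $X$ and $Y$ relative to $N' := m(N)^2$. By \claimref{claim:mn} we have $N \geq N'/2$, hence $|X| \geq \delta N \geq (\delta/2) N'$ and likewise $|Y| \geq (\delta/2) N'$. Since, by hypothesis, $G_{m(N),\delta/2}$ is a $(\delta/2)$-bipartite expander on $N'$ nodes per side, applying \defref{delta-bipartite} to $X$ and $Y$ yields an edge $(X_i, Y_j) \in E_{m(N),\delta/2}$ with $X_i \in X$ and $Y_j \in Y$; in particular $i, j \leq N$.

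Finally I would transport this edge back to $H_{N,\delta}$. The edge $(X_i,Y_j) \in E_{m(N),\delta/2}$ means $i \in \mathsf{GetParents}(m(N),\delta/2,j)$, and since $i \leq N$ this gives $i \in \mathsf{GetParents}(m(N),\delta/2,j) \cap \{1,\ldots,N\} = \mathsf{GetParentsTrunc}(N,\delta,j)$, so $(a_i,b_j) \in E'_{N,\delta}$ is the desired edge between $X$ and $Y$. Since $X$ and $Y$ were arbitrary subsets of size at least $\delta N$, this shows $H_{N,\delta}$ is a $\delta$-bipartite expander. The only point requiring care is that truncation discards at most half of the nodes on each side — which is exactly what \claimref{claim:mn} guarantees — so that the weaker parameter $\delta/2$ of $G_{m(N),\delta/2}$ is upgraded to $\delta$ after truncation; everything else is routine index chasing.
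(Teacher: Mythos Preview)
Your proposal is correct and matches the paper's own proof essentially step for step: fix subsets of size at least $\delta N$, use \claimref{claim:mn} to get size at least $(\delta/2)m(N)^2$, invoke the $(\delta/2)$-bipartite expander property of $G_{m(N),\delta/2}$ to find an edge, and observe that the edge survives truncation because its source index is at most $N$. There is nothing to add.
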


\begin{proof}
Consider two sets $X \subseteq \{1,\ldots , N\}$ and $Y \subseteq \{1,\ldots, N\}$ and set $m=m(N)$. If $|X| \geq \delta N$ and $|Y| \geq \delta N$ then by \claimref{claim:mn} we have $|X| \geq (\delta/2) m^2$ and $|Y|  \geq (\delta/2) m^2$. Thus, since $G_{m,\delta/2}$ is a $\delta/2$-bipartite expander and $X,Y \subseteq \{1,\ldots, m^2\}$ there must be some pair $(i,j) \in X \times Y$ with $i \in \mathsf{GetParents}(m,\delta/2, j)$. Since $i \leq N$ we also have  $i \in \mathsf{GetParentsTrunc}(N,\delta, j) = [N] \cap \mathsf{GetParents}(m,\delta/2, j)$. Thus, the edge $(a_i,b_j)$ still exists in the truncated graph $H_{m,\delta}$.  It follows that $H_{m,\delta}$ is a $\delta$-bipartite expander. 
\end{proof}

In the remainder of this section, we will focus on constructing $G_{m,\delta}$. In the next subsection, we first review the construction of $(N=m^2,k=5,d=(2-\sqrt{3})/4)$-expanders due to  Gabber and Galil \cite{GabGal81}.

\subsection{Explicit $(\replaced{N}{n},k,d)$-Expander Graphs}
 Let $P_m \doteq \{0,1,\ldots,m-1\}\times\{0,1,\ldots,m-1\}$ be the set of pairs of integers $(x,y)$ with $0 \leq x,y \leq m-1$. We can now define the family of bipartite graphs $G_{m}=((A_m,B_m),E_m)$ where $A_m=\{X_{i,j}=(i,j): (i,j) \in P_m \}$ and $B=\{Y_{i,j}=(i,j): (i,j) \in P_m\}$. The edge set $E_m$ is defined using the following $5$ permuatations  on $P_m$: 
\begin{align*}
\sigma_0(x,y) &= (x,y),\\
\sigma_1(x,y) &= (x,x+y),\\
\sigma_2(x,y) &= (x,x+y+1),\\
\sigma_3(x,y) &= (x+y,y),\\
\sigma_4(x,y) &= (x+y+1,y),
\end{align*}
where the operation $+$ is modulo $m$. Now we can define the edge set $E_m$ as 
\[ E_m=\{(X_{i',j'},Y_{i,j}):\exists~ 0\leq k\leq 4\text{ such that }\sigma_k(i',j')=(i,j)\}. \]

Gabber and Galil \cite{GabGal81} proved that the graph $G_m$  is a $(N,k,d)$-expander with $N=m^2$ nodes on each side of the biparition ($A_m$ / $B_m$), $k=5$, and $d=(2-\sqrt{3})/4$.

It will be convenient to encode nodes using integers between $1$ and $N=m^2$ instead of pairs in $P_m\deleted{\times P_m}$. define $\mathtt{PairToInt}_m(x,y)=xm+y + 1$, a bijective function mapping pairs $(x,y) \in  \{0,1,\ldots,m-1\}\times\{0,1,\ldots,m-1\}$ to integers $\{1,\ldots, m^2\}$ along with the inverse mapping $\mathtt{IntToPair}_m(z) = \left( \lfloor \frac{z-1}{m} \rfloor   , (z-1) \mod m \right)$. We can then redefine the permutations over the set $\{1,\ldots, m^2\}$ as follows $\sigma_j'(z) = \mathtt{PairToInt}_m\left( \sigma_j\left(\mathtt{IntToPair}\added{_m}(z) \right) \right)$ and we can (equivalently) redefine $G_{m}=((A_m,B_m),E_m)$ where $A_m=\{X_1,\ldots,X_{m^2}\}$, $B_m= \{Y_1,\ldots,Y_{m^2}\}$ and $E_m = \{ (X_{i},Y_{j}) ~: ~1 \leq j \leq m^2 \wedge i \in \mathsf{GetParentsGG}(m, \replaced{j}{y}) \}$. Here, $\mathsf{GetParentsGG}(m, \replaced{j}{y}) = \{\sigma_0'(j),\sigma_1'(j), \sigma_2'(j),\sigma_3'(j),\sigma_4'(j)\}$.

\subsection{Amplification via Layering}
\replaced{Given}{Now} that we have constructed explicit $\delta$-bipartite expanders with constant indegree for a fixed $\delta>0$, we will construct explicit $\delta$-bipartite expanders with constant indegree for any arbitrarily small $\delta>0$. The construction is recursive. As our base case we define $G_{m}^0 = G_m =((A_m,B_m),E_m)$ where $A_m=\{X_1,\ldots,X_{m^2}\}$, $B_m= \{Y_1,\ldots,Y_{m^2}\}$ and $E_m = \{ (X_{i},Y_{j}) ~: ~1 \leq j \leq m^2 \wedge i \in \mathsf{GetParentsGG}(m, \replaced{j}{y}) \}$ as the $(N=m^2,k=5,d=(2-\sqrt{3})/4)$-expander of Gabber and Galil \cite{GabGal81} and we define $\mathsf{GetParentsLayered}^1(m,\replaced{j}{y}) = \mathsf{GetParentsGG}(m,\replaced{j}{y})$. We can then define  $G_{m}^{i+1}=((A_m,B_m),E_m^{i+1})$ where $A_m=\{X_1,\ldots,X_{m^2}\},\allowbreak B_m= \{Y_1,\ldots,Y_{m^2}\}$ and $E_m^{i+1} = \{ (X_{i},Y_{j}) ~: ~1 \leq j \leq m^2 \wedge i \in \mathsf{GetParentsLayered}^{i+1}(m, \replaced{j}{y}) \}$ where $\mathsf{GetParents\added{Layered}}^{i+1}(m,\replaced{j}{y}) = \bigcup_{\replaced{j}{y}' \in \mathsf{GetParentsGG}(m,\replaced{j}{y})} \mathsf{GetParents\added{Layered}}^i(m,\replaced{j}{y}')$. Intuitively, we can form the graph $G_m^{i}$ by stacking $i$ copies of the graph $G_m$ and forming a new bipartite graph by collapsing all of the intermediate layers. See \figref{fig:layered} for an illustration.

\begin{figure}[ht]
\centering
\includegraphics[width=\linewidth]{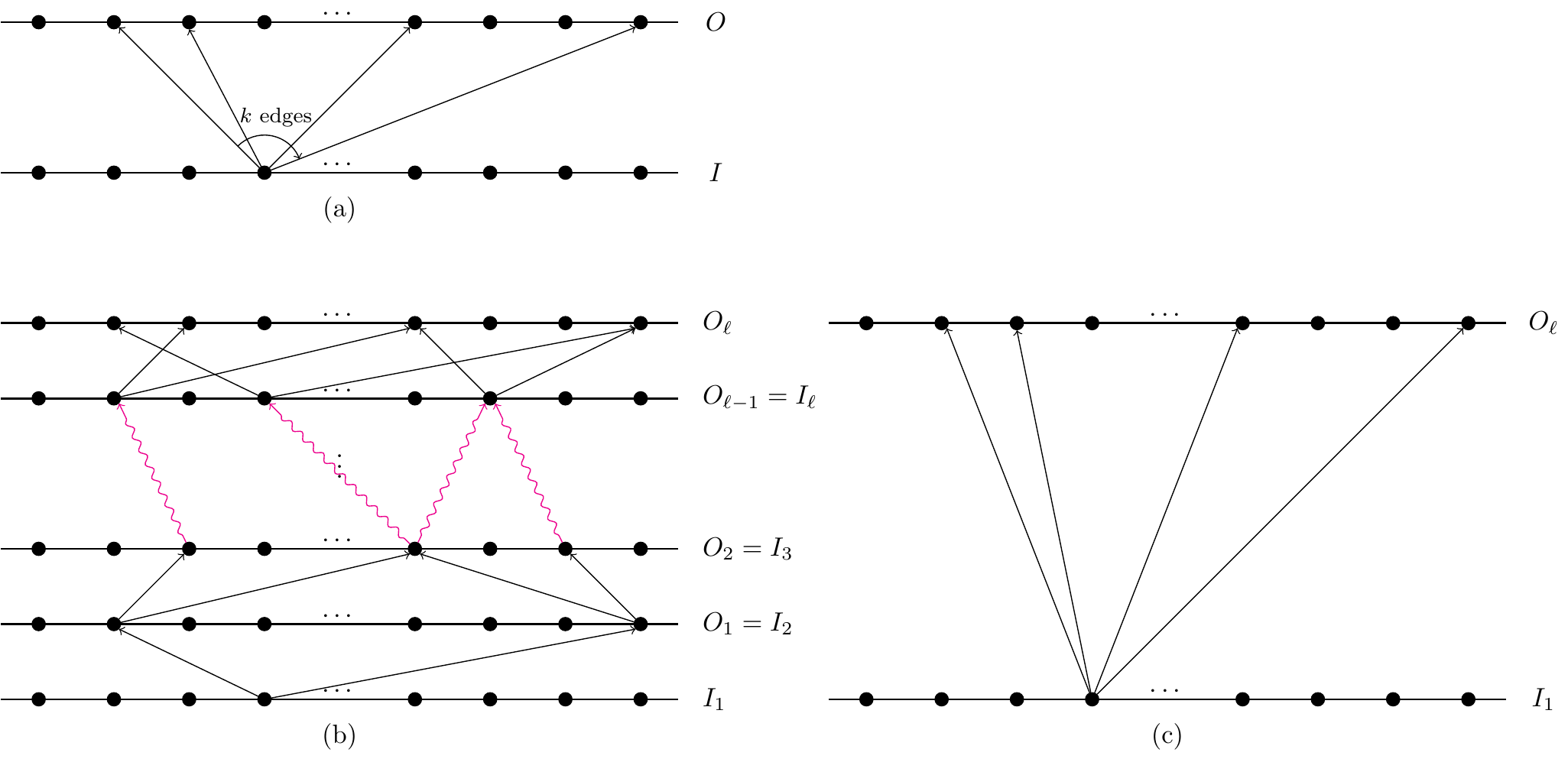}
\caption{(a) One copy of an $(\replaced{N}{n},k,d)$-expander. Here, we remark that each input node has exactly $k$ edges such that the total number of edges is \replaced{$kN$}{$kn$}. (b) Stack the graph $\ell$ times to get a graph with $(\ell+1)$ layers. The snaked edges from the third to $\ell$\th layer indicates that there are connected paths between the nodes. (c) Generate a new bipartite graph by collapsing all of the intermediate layers. A node $u$ on the bottom layer $I_1$ has an edge to a node $v$ on the top layer $O_{\ell}$ if and only if there is a path in the original graph.}
\figlab{fig:layered}
\end{figure}

We note that $\left| \mathsf{GetParents\added{Layered}}^{i+1}(m,\replaced{j}{y}) \right| \leq k \times  \left| \mathsf{GetParents\added{Layered}}^{i}(m,\replaced{j}{y}) \right| \leq k^{i+1}$. \deleted{Furthermore, $G_m^{i+1}$ is a $(N=m^2, k^{i+1}, d_{i+1})$-expander with $d_1=(2-\sqrt{3})/4)$ and $d_{i+1} $.} \added{\thmref{amplify} tells us that amplification by layering yields a $\delta$-bipartite expander. In particular, there is a constant $L_\delta$ such that $G_m^i$ is a $\delta$-bipartite expander whenever $i \geq L_{\delta}$. By our previous observation this graph has  indegree at most $k^{L_{\delta}}$ which is a constant since $k$ and $L_{\delta}$ are both constants.}

\ignore{

\jeremiah{Update algorithms below}

\begin{algorithm}
  \caption{$\mathsf{GetParentsLayered}(n, l, v)$ Finds the parents of node $v$ in Construction \ref{con:layered} with $n$ inputs and $l$ layers. }
  \label{getparentslayered}
  \begin{algorithmic}[1]
    \Procedure{\textsf{GetParentsLayered}}{$n, l, v$}
    \If {$l = 1$}
    \State \textbf{return} \textsf{GetParents}$(n, v) \cap [1, n]$ 
    \Else 
    \State $P \gets \{ \}$
    \For {each node $w \in \mathsf{GetParents}(n, v)$}
    \State $P \gets P \cup \mathsf{GetParentsLayered}(n, l-1, w)$
    \EndFor
    \State \textbf{return} $P$
    \EndIf
    \EndProcedure
  \end{algorithmic}
\end{algorithm}

\mike{Need to update the figure for the recursive construction?}

\begin{construction}[$\mathbf{L(n, l)}$]
  \label{con:layered}
  Let $G_n$ be an $(n, k, d)$-expander with $k = 5$ and $d = (2 - \sqrt{3})/4$ from \cite{GabGal81}. Let $L(n, 1) = G_n$. Let the intermediate graph $S(n, l) = (V_s, E_s)$, where  

  Then let $L(n, l) = (V, E)$ be a bipartite graph, where $V = I \cup O$, $|I| = |O| = n$, and $E = \{ (u, v) | \exists \mathrm{\ a\ path\ of\ length\ 2\ between\ }u\mathrm{\ and\ }v\mathrm{\ in\ }S(n, l)\} $.

  See Figure \ref{fig:layered} for a diagram of the construction.
\end{construction}



\mike{also include the other version of the theorem}
\mike{include a discussion on constants that compares the two theorems}}

\begin{theorem} \thmlab{amplify}
\replaced{For any constant $\delta>0$, there exists a constant $L_\delta$ such that for any $i\geq L_\delta$ the graph $G_m^i$ is a $\delta$-bipartite expander with $N=m^2$ nodes on each side of the partition.}
{For each $i, m \geq 1$ the graph $G_m^{i}$ is a $\delta_i$-bipartite expander for some sequence $\{\delta_i\}_{i=1}^\infty$ such that  $\delta_{i+1} \leq \delta_{i}(1 - d) + \delta_{i}^2 d$ where  $d = \frac{2 - \sqrt{3}}{4}$ and  $\delta_1 = \frac{10 - \sqrt{3} - \sqrt{71-4\sqrt{3}}}{2(2-\sqrt{3})} \approx .4916$.} 
\end{theorem}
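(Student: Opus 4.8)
The plan is to reduce the $\delta$-bipartite expander property of $G_m^i$ to the behaviour of the one–dimensional map $\alpha\mapsto g(\alpha):=\alpha\bigl(1+d(1-\alpha)\bigr)$ under iteration, where $d=(2-\sqrt3)/4$ is the expansion parameter of the Gabber--Galil base graph. By construction, $G_m^i$ is obtained by stacking $i$ copies of $G_m$ into layers $L_0,L_1,\dots,L_i$ (each a copy of $[m^2]$) and placing an edge from $u\in L_0$ to $v\in L_i$ exactly when there is a directed path $u=w_0\to w_1\to\cdots\to w_i=v$ through the stack. Hence, for any $X\subseteq A_m$, the forward neighbourhood $\nbr{X}$ computed in $G_m^i$ equals the set $X_i$ defined by $X_0=X$ and $X_t=\nbr{X_{t-1}}$, where each step is the ordinary $G_m$-neighbourhood with $L_{t-1}$ playing the role of the source side and $L_t$ the sink side (an easy reachability induction). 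So it suffices to show: there is a constant $L_\delta$, depending only on $\delta$ and $d$ and not on $m$, such that for every $i\ge L_\delta$ and every $X\subseteq A_m$ with $|X|\ge\delta N$ we have $|X_i|>(1-\delta)N$. Indeed, once this is established, any $Y\subseteq B_m$ with $|Y|\ge\delta N$ satisfies $|X_i|+|Y|>(1-\delta)N+\delta N=N=|B_m|$, so $X_i\cap Y\neq\emptyset$, which is exactly an edge of $G_m^i$ between $X$ and $Y$.

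To prove this claim I would iterate the $(N,k,d)$-expander guarantee layer by layer. That guarantee gives $|X_t|\ge g(|X_{t-1}|/N)\,N$; since $d<1$ we have $g'(\alpha)=1+d-2d\alpha\ge 1-d>0$ on $[0,1]$, so $g$ is strictly increasing there, and chaining the inequalities (this monotonicity is exactly what licenses the chaining) yields $|X_i|/N\ge g^{(i)}(|X_0|/N)\ge g^{(i)}(\delta)$. Now consider the orbit $a_0=\delta$, $a_{t+1}=g(a_t)$. For $\alpha\in(0,1)$ one computes $g(\alpha)-\alpha=d\alpha(1-\alpha)>0$ and $g(\alpha)-1=(1-\alpha)(d\alpha-1)<0$, while $g(0)=0$ and $g(1)=1$; therefore $(a_t)$ is strictly increasing and bounded above by $1$, hence converges, and by continuity its limit is a fixed point of $g$ in $(0,1]$, which must be $1$. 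Thus $g^{(t)}(\delta)\to1$, and since the sequence is increasing there is a finite $L_\delta$ with $g^{(t)}(\delta)>1-\delta$ for all $t\ge L_\delta$, with $L_\delta$ depending only on $\delta$ and $d$. This finishes the reduction, and it simultaneously bounds $\indeg(G_m^{i})\le k^{i}=5^{i}$, a constant when $i=L_\delta$.

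I do not anticipate a real obstacle; the two points requiring care are (i) the bookkeeping that the $G_m^i$-neighbourhood of a set is genuinely the $i$-fold iterate of the $G_m$-neighbourhood, so that the per-layer expansion bound may be applied at each of the $i$ layers in turn, and (ii) checking that $L_\delta$ comes out independent of $m$ --- which it does, because the per-layer multiplicative gain is captured entirely by the constant-dependent map $g$. If an explicit value is wanted in place of the soft convergence argument, note that $1-g(\alpha)=(1-\alpha)(1-d\alpha)\le(1-d\delta)(1-\alpha)$ for $\alpha\in[\delta,1]$, so $1-a_t\le(1-d\delta)^t(1-\delta)$, which drops below $\delta$ once $t=O\!\bigl(\tfrac{1}{d\delta}\log\tfrac1\delta\bigr)$; one may take $L_\delta$ to be this bound.
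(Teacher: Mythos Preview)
Your proposal is correct and follows essentially the same approach as the paper: both iterate the one-layer $(N,k,d)$-expansion bound through the $i$ stacked copies and argue that the iterated neighbourhood density exceeds $1-\delta$ after a number of steps depending only on $\delta$ and $d$. The paper uses the slightly cruder multiplicative estimate $|Y^{t+1}|\ge(1+d\delta)|Y^t|$ (valid while $|Y^t|\le(1-\delta)N$) to get $L_\delta=\lceil\log((1-\delta)/\delta)/\log(1+d\delta)\rceil+1$ directly, whereas you track the exact map $g$ and then recover the same order of magnitude via $1-g(\alpha)\le(1-d\delta)(1-\alpha)$; these are two packagings of the same computation.
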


\begin{proof}
Fix any subset $Y^0\subseteq[N]$ of size $|Y^0|\geq\delta N$. Let $Y^1\doteq\bigcup_{j\in Y^0}\mathsf{GetParentsGG}(m,j)$, and recursively define $Y^{i+1}\doteq\bigcup_{j\in Y^i}\mathsf{GetParentsGG}(m,j)$. Since $Y^i=\bigcup_{j\in Y^0}\mathsf{GetParentsLayered}^i(m,j)$, it suffices to argue that $|Y^i|>(1-\delta)N$ whenever $i\geq L_\delta\doteq\left\lceil\frac{\log((1-\delta)/\delta)}{\log(1+d\delta)}\right\rceil + 1$. To see this, we note that for each $i\geq 0$, either 
\begin{enumerate}
\item $|Y^i|$ has already reached the target size $(1-\delta)N$, or
\item $|Y^{i+1}|\geq\left[1+d\left(1-\frac{|Y^i|}{N}\right)\right]|Y^i|\geq(1+d\delta)|Y^i|$ since $\mathsf{GetParentsGG}$ defines an $(N,k,d)$-expander.
\end{enumerate}
It follows that $|Y^{i+1}|\geq\min\{(1-\delta)N,(1+d\delta)^i\delta N\}$. Now we want to find $i$ such that $(1+d\delta)^i\delta N = (1-\delta)N$; solving the equation we have $i=\frac{\log((1-\delta)/\delta)}{\log(1+d\delta)}$. Thus, for $i = L_{\delta}-1$ we have $|Y^{i}| \geq (1-\delta)N$ and for $i \geq L_{\delta}$ we have $|Y^{i}| > (1-\delta)N$. Thus, for $i\geq L_\delta$ the graph $G_m^i$ is a $\delta$-bipartite expander, i.e., for any subsets $X,Y \subseteq [N]$ of size $|X| \geq \delta N = \delta m^2$ we must have $\left| X \cap \bigcup _{j \in Y}\mathsf{GetParentsLayered}^i(m,j)  \right| > 0$ as long as $i \geq L_\delta$.
\ignore{
We know from \lemref{nkd-then-delta}, that $G_{m}^1$ is a $\delta_1$-bipartite expander when  $\delta_1 = \frac{(d+2) - \sqrt{d^2 + 4}}{2d}$. Then plugging in $d = (2 - \sqrt{3})/4$ we see that $\delta_1 =   \frac{10 - \sqrt{3} - \sqrt{71-4\sqrt{3}}}{2(2-\sqrt{3})} \approx .4916$. In particular, we have $\delta_1 < 0.5$.

Assume that $G_m^i$ is a $\delta_i$-expander with $\delta_i<0.5$. We will inductively show that $G_m^{i+1}$ is a $\delta_{i+1}$-expander with $\delta_{i+1} \doteq  \delta_{i}(1 - d) + \delta_{i}^2 d < \delta_i < 0.5$.

  Let $Y \subset [1, n]$ with $|Y| \geq \delta_i n$ be given. It suffices to argue that $\left|\bigcup_{y \in Y} \mathsf{GetParents}^{i+1}(m,y) \right| \geq (1-\delta_{i+1}) m^2$.  Let $Y' = \bigcup_{y \in Y} \mathsf{GetParents}^{i}(m,y)$. Since $G^i_m$ is a $\delta_i$ expander we have $|Y'| \geq   (1-\delta_{i}) m^2$ and we have $\bigcup_{y \in Y} \mathsf{GetParents}^{i+1}(m,y) = \bigcup_{y \in Y'} \mathsf{GetParentsGG}(m,y)$. Thus, since $G_m^1$ is a $(N,k,d)$-expander we will show that $\left|\bigcup_{y \in Y} \mathsf{GetParents}^{i+1}(m,y) \right| > (1+d \delta_i ) (1-\delta_i) m^2$.
  
 As in the proof of \lemref{nkd-then-delta}, we note that $f(x) = -\frac{d}{m^2}x^2 + (d + 1)x $ is an increasing function for $0 \leq x \leq m^2$. Hence, we have that
  \begin{footnotesize}
  \begin{align*}
\left|\bigcup_{y \in Y} \mathsf{GetParents}^{i+1}(m,y) \right| & \geq \left(1+ d\left(1-\frac{\left|\bigcup_{y \in Y} \mathsf{GetParents}^{i}(m,y) \right|}{m^2} \right) \right) \left|\bigcup_{y \in Y} \mathsf{GetParents}^{i}(m,y) \right| \\
    & \geq \frac{-d}{m^2}\left|\bigcup_{y \in Y} \mathsf{GetParents}^{i}(m,y) \right|^2 + (d+1)\left|\bigcup_{y \in Y} \mathsf{GetParents}^{i}(m,y) \right| \\
    & > \frac{-d}{m^2}(1 - \delta_i)^2 m^4 + (d + 1)(1 - \delta_i)m^2 \\
    & = m^2(1 - \delta_i)(-d(1 - \delta_i) + d + 1) \\
    & = m^2 (1 - \delta_i) (1 + d \delta_i) \\
 & = m^2 (1-\delta_{i+1}) \ .
  \end{align*}
  \end{footnotesize}
The last inequality follows from definition of $\delta_{i+1}$ and the observation that $(1 - \delta_{i+1}) = 1-\delta_i + d \delta_i - \delta_i^2 d = (1 - \delta_i)(1 + d \delta_{i})$. }
\end{proof}

\ignore{
\begin{algorithm}
  \caption{$\mathsf{GetParents}(n, v)$ Finds the parents of node $v$ in $G_n$.}
  \label{getparents}
  \begin{algorithmic}[1]
    \Procedure{\textsf{GetParents}}{$n, v$} 
    \State $i \gets \lfloor v/m(n)\rfloor$
    \State $j \gets n - m(n)i$ 
    \State \textbf{return} $\{v, [m(n)i + (j - i)], [m(n)i + (j - i - 1)], [m(n)(i - j) + j], [m(n)(i - j - 1) + j] \}$ 
    \EndProcedure
  \end{algorithmic}
\end{algorithm}      

\begin{claim}
Let $\{\delta_i\}_{i=1}^\infty$ be any seqeuence with $\delta_1 < 0.5$ and $ \delta_{i+1} \leq \delta_{i}(1 - d) + \delta_{i}^2 d$ with $d = \frac{2 - \sqrt{3}}{4}$. Then for all $i \geq 1$ we have $0 \leq \delta_{i+1} \leq \delta_1 \left(1- \frac{d}{2} \right)^{i}$ and in particular $\lim_{i \rightarrow \infty} = 0$.  \claimlab{well-defined} 
\end{claim}
\begin{proof} 
Observe that if $\delta_i< 0.5 $ then $\frac{\delta_{i+1}}{\delta_{i}} <(1-d) + \delta_id < 1-d/2 < 1$. Thus, $\delta_{i+1} \leq \delta_1 \left(1- \frac{d}{2} \right)^{i}$ and it follows that   $0 \leq \lim_{i\to\infty} \delta_i  \leq \lim_{i\to\infty} \delta_1 \left(1- \frac{d}{2} \right)^{i-1} = 0$.  
\end{proof}
}

\subsection{Final Construction of $\delta$-Bipartite Expanders}

Based on the proof of \thmref{amplify}, we can define \replaced{$L_\delta\doteq \left\lceil\frac{\log((1-\delta)/\delta)}{\log(1+d\delta)}\right\rceil+1$}{$L_\delta = \lceil \frac{\ln (\delta/\delta_1)}{\ln (1-d/2)} \rceil$},  $G_{m, \delta} \doteq G_m^{L_{\delta}}$, and obtain $H_{N,\delta}$ by truncating the graph $G_{m(N), \delta/2}$. The edges are defined by the \replaced{procedure}{prodecure} $\mathsf{GetParentsBE}(N, \delta, j) \doteq [N] \cap \mathsf{GetParents\added{Layered}}^{L_{\delta/2}}(m(N), j)$ --- the procedure  $\mathsf{GetParentsBE}$ is short for ``Get Parents Bipartite Expander''. Formally, we have $H_{N,\delta} = ((A_N=\{a_1,\ldots, a_N\},\allowbreak B_N=\{b_1,\ldots, b_N\}), E_{N,\delta})$ where  $ E_{N,\delta}= \{ (a_i,b_j)~: ~i \in \mathsf{GetParentsBE}(N,\delta, j)\}$.

\begin{corollary}
Fix any constant $\delta > 0$ and define \replaced{$L_\delta = \left\lceil\frac{\log((1-\delta)/\delta)}{\log(1+d\delta)}\right\rceil+1$}{$L_\delta = \lceil \frac{\ln (\delta/\delta_1)}{\ln (1-d/2)} \rceil +1$}. The graph  $G_m^{L_\delta }$ is a $\delta$-bipartite expander and the graph $H_{N,\delta}$ is a $\delta$-bipartite expander for any integers $m,N \geq 1$. 
\end{corollary}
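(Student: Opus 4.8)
The plan is to obtain the corollary by directly combining \thmref{amplify} (amplification via layering) with \lemref{drop} (truncation preserves bipartite expansion), together with the definitions of $G_{m,\delta} = G_m^{L_\delta}$ and $H_{N,\delta}$ set up just above the corollary.

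First I would dispatch the claim that $G_m^{L_\delta}$ is a $\delta$-bipartite expander. This is nothing more than \thmref{amplify} specialized to $i = L_\delta$: that theorem fixes exactly the threshold $L_\delta = \lceil \log((1-\delta)/\delta)/\log(1+d\delta)\rceil + 1$ and asserts that $G_m^i$ is a $\delta$-bipartite expander (on $N = m^2$ nodes per side) for every $i \ge L_\delta$, and in particular for $i = L_\delta$ itself. Nothing additional needs checking: $L_\delta$ is a positive integer and the recursively-defined family $\{G_m^i\}$ includes this index.

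Second I would handle $H_{N,\delta}$. Recall that $H_{N,\delta}$ is, by construction, the graph obtained by truncating $G_{m(N),\delta/2} = G_{m(N)}^{L_{\delta/2}}$ down to its first $N$ nodes on each side, with edges $\mathsf{GetParentsBE}(N,\delta,j) = [N]\cap \mathsf{GetParentsLayered}^{L_{\delta/2}}(m(N),j)$. Applying the first part of the corollary with $\delta$ replaced by $\delta/2$ tells us that $G_{m,\delta/2} = G_m^{L_{\delta/2}}$ is a $\delta/2$-bipartite expander for every $m \ge 1$, which is precisely the hypothesis \lemref{drop} needs (with density parameter $\delta' = \delta/2$). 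Invoking \lemref{drop} then gives that $H_{N,\delta}$ is a $\delta$-bipartite expander. (Internally, \lemref{drop} uses \claimref{claim:mn}, i.e. $m(N)^2/2 \le N \le m(N)^2$, so that any $X,Y \subseteq [N]$ with $|X|,|Y| \ge \delta N$ also satisfy $|X|,|Y| \ge (\delta/2)m(N)^2$, invokes $\delta/2$-expansion of $G_{m(N),\delta/2}$ to find an edge $(i,j)$ with $i,j \le N$, and observes that truncation removes only edges incident to discarded nodes, so this edge survives in $H_{N,\delta}$.)

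I do not expect any genuine obstacle: the content is entirely in \thmref{amplify} and \lemref{drop}, and the corollary is the clean packaging of those two facts once one matches the constant $L_{\delta/2}$ in the definition of $G_{m,\delta/2}$ against the threshold supplied by \thmref{amplify} and feeds the density $\delta/2$ into \lemref{drop}. The only things to be careful about are these two bookkeeping identifications, both of which hold by the definitions given immediately before the corollary, so the statement follows.
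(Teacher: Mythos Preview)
Your proposal is correct and matches the paper's proof essentially line for line: the paper likewise invokes \thmref{amplify} at $i=L_\delta$ for the first claim, then notes that $G_{m(N),\delta/2}$ is a $\delta/2$-bipartite expander and applies \lemref{drop} for the second. Your added unpacking of \lemref{drop} is extra detail, not a different argument.
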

\begin{proof}
By \thmref{amplify} $G_m^{L_{\delta}}$ is a \replaced{$\delta$}{$\delta_{i}$}-bipartite expander\deleted{ with $i = L_\delta$}. \deleted{By (claim) we have \[ \delta_i \leq \delta_1   \left(1- \frac{d}{2} \right)^{ \lceil \frac{\ln (\delta/\delta_1)}{\ln (1-d/2)} \rceil} \leq \delta_1 \left( \frac{\delta}{\delta_1}\right) = \delta \ . \]}
To see that $H_{N,\delta}$ is a $\delta$-bipartite expander we simply note that $G_{m(N), \delta/2}$ is a $\delta/2$-bipartite expander and apply \lemref{drop}.
\end{proof}

\ignore{

In order to make our graph efficiently navigable, we present Algorithm \ref{getparents} to find the parents of any given output node.

\begin{proof}{Proof of Correctness of Algorithm \ref{getparents}}
  We first prove the correctness of \textsf{Invert}. We know from the definition of the $(n, k, d)$-expander graphs that a single graph is constructed by applying five permutations to each node in the input set. Thus to find the parents within one layer, we can just apply the five inverse permutations. We represent the node $w$ as $w = m(n)i + j$, where $0 \leq i \leq m(n)$ and $0 \leq j \leq m(n)$. Then we find that the inverse permutations are:
  \begin{align*}
\sigma_0(w)^{-1} &= w\\
\sigma_1(w)^{-1} &= m(n)i + (j - i)\\
\sigma_2(w)^{-1} &= m(n)i + (j - i - 1)\\
\sigma_3(w)^{-1} &= m(n)(i - j) + j\\
\sigma_4(w)^{-1} &= m(n)(i - j - 1) + j
  \end{align*}
  Thus \textsf{Invert} correctly finds the parents of node within a single layer.

  We now prove the correctness of \textsf{GetParents}. If $n$ is a perfect square, then \textsf{GetParents} calls \textsf{GetParentsSquare}. We construct $L(n, l)$ by connecting $L(n, l-1)$ to a $(n, k, d)$-expander $G_n$, and then finding all of the paths of length two. We see that \textsf{Invert}$(v)$ will find all of the parents of $v$ within $G_n$. Then we recursively call \textsf{GetParentsSquare} to find the parents of each of those nodes in $L(n, l-1)$. Thus the nodes returned by \textsf{GetParentsSquare}$(n, l, v)$ are all of the nodes in the intermediate graph $S(n, l)$ that are distance two away from $v$. Thus \textsf{GetParentsSquare} is correct.
  
  If $n$ is not a perfect square, then \textsf{GetParents} calls \textsf{GetParentsTrunc}. The proof of correctness of \textsf{GetParentsTrunc} is the same as \textsf{GetParentsSquare}, except to note that due to the truncation, some of the values returned by \textsf{Invert} may no longer be in the graph, so we iterate over each node in $\mathsf{Invert}(v) \cap [1,n]$ to ensure correctness.
\end{proof}

\subsection{(Nonexplicit) \texorpdfstring{$\delta$}{Delta}-Bipartite Expander Graphs}

\jeremiah{This next paragraph should be moved later....}

\paragraph{Building Depth-Robust Graphs} State of the art constructions of depth-robust graphs rely on constant degree $\delta$-bipartite expanders as the central building block. More specifically, \cite{ErdGraSze75} show how to construct a $\delta$-local expander with maximum indegree $O(\log N)$. A $\delta$-local expander is a directed acyclic graph $G$ which has the following property:  for any $r, v \geq 0$ and any subsets  $X \subseteq  A=[v,v+r-1]$ and $Y\subseteq B = [v+r,v+2r-1]$ of at least $|X|,|Y| \geq \delta r$ nodes the graph $G$ contains an edge $(x,y)$ with $x \in X$ and $y \in Y$. Intuitively, this means that any ``local'' subgraph formed by consecutive nodes in $G$ contains a $\delta$-bipartite expander. One can then show that for suitably small $\delta > 0$ any $\delta$-local expander is $(e,d)$-depth-robust with $e=\Omega(N)$ and $d= \Omega(N)$~\cite{EGS75}. Alwen et al.~\cite{EC:AlwBloPie18} later extend this result to construct $\epsilon$-extreme depth-robust graphs with indegree $O(\log N)$ which are $(e,d)$-depth-robust for any $e,d$ with $e+d \leq (1-\epsilon)N$. In particular, for any $\epsilon >0 $ there is a constant $\delta_\epsilon>0$ s.t. a $\delta$-local expander is guaranteed to be  $\epsilon$-extreme depth-robust graphs as long as $\delta \leq \delta_\epsilon$. 

\seunghoon{Introduce \cite{EGS75} construction}
}

\section{Explicit Constructions of Depth Robust Graphs}
We are now ready to present our explicit construction of a depth-robust graph. For any $N=2^n$ we define the graph $G(\delta, N) = ([N], E(\delta, N))$ with edge set $E(\delta, N) = \{ (u,v)~: ~ v \in [N] \wedge u \in \GetParentsEGS(\delta, v, N)\}$. The procedure $\GetParentsEGS(\delta, v, N)$ to compute the edges of $G(\delta,N)$ relies on the procedure $\mathsf{GetParentsBE}$ which computes the edges of our underlying bipartite expander graphs. We remark that our construction is virtually identical to the construction of \cite{EGS75} except that the underlying bipartite expanders are replaced with our explicit constructions from the last section.

\begin{algorithm}
    \caption{$\GetParentsEGS(\delta, v, N)$}
    \alglab{getparentsegs}
    \begin{algorithmic}[1]
    \Procedure{\textsf{GetParentsEGS}}{$\delta, v,N$}
	\State $P = \{v-4n, ... , v-1 \}$
    \For {$t=1$ to $\ceil{\log_2{v}}$}
    \State {$m = \flr{v/2^t}$}
    \State {$x = v \mod 2^t$}
    \State {$B = \mathsf{GetParentsBE}(2^t, \replaced{L_{\delta/5}}{l_{\delta/5}},x+1)$}
        \For {$y \in B$}
        \State {$P = P \cup \{(m-i)2^t + y\added{\,:\,1\leq i\leq \min\{m,10\}}\}$}
        \EndFor
    \EndFor
    \State \textbf{return} $P \cap \{1, ... , N\}$
    \EndProcedure
    \end{algorithmic}
\end{algorithm}

Note that for any constant $\delta > 0$ and any integer $n \geq 1$, the graph $G(\delta, N)$ defined by $\GetParentsEGS(\delta, \cdot, N )$ has $N=2^n$ nodes and maximum indeg $\indeg(G(\delta, N))= O(n) = O(\log N)$.

Erd\"os, Graham, and Szemeredi \cite{EGS75} showed that the graph $G(\delta , N)$ is a $\delta$-local expander as long as the underlying bipartite graphs are $\delta/5$-bipartite expanders. 

\begin{theorem}[\cite{EGS75}] \thmlab{EGSIsLocalExpander} For any $\delta >0 $ the graph  $G(\delta, N)$ is a $\delta$-local expander.
\end{theorem}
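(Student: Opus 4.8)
The plan is to show that for any $r \geq 1$ and any starting vertex $v$, the local subgraph on $A = [v, v+r-1]$, $B = [v+r, v+2r-1]$ contains an edge between any $X \subseteq A$ and $Y \subseteq B$ with $|X|, |Y| \geq \delta r$. The key idea from \cite{EGS75} is a dyadic/recursive argument on the binary structure of the interval. First I would handle the trivial base case: when $r \leq 4n$ (or some small constant depending on the ``direct'' edges $\{v-4n, \ldots, v-1\}$ added in line~2 of \algref{getparentsegs}), every node in $B$ is connected to every node in $A$, so the claim is immediate. For larger $r$, I would locate the scale $t$ at which the interval $[v, v+2r-1]$ is ``resolved'': choose $t$ so that $2^t$ is comparable to $r$ (say $2^{t-1} < r \leq 2^t$, possibly after shifting), and consider the block decomposition of $\mathbb{Z}$ into consecutive intervals $I_m = [m 2^t, (m+1)2^t - 1]$ of length $2^t$. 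The interval $[v, v+2r-1]$ of length $2r \leq 2^{t+1}$ meets at most three such blocks; a counting argument shows that a constant fraction of $X$ lands in one block $I_{m'}$ and a constant fraction of $Y$ lands in a block $I_{m}$ with $m > m'$ and $m - m' \leq 10$ (this is where the $\min\{m,10\}$ bound in line~8 is used).

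Next I would invoke the bipartite-expander property. Within the construction, for each scale $t$ and each residue $x = v \bmod 2^t$, line~6 calls $\mathsf{GetParentsBE}(2^t, L_{\delta/5}, x+1)$, which by the Corollary of Section~3 realizes a $\delta/5$-bipartite expander on $[2^t] \times [2^t]$; and line~8 wires node $(m-i)2^t + y$ to node $v$ for $y$ a parent of $x+1$ in that expander and $1 \leq i \leq \min\{m,10\}$. So the edges of $G(\delta, N)$ between block $I_{m-i}$ and block $I_m$ (restricted to matching residues) are exactly an embedded copy of the $\delta/5$-bipartite expander $H_{2^t, \delta/5}$. Projecting $X \cap I_{m'}$ and $Y \cap I_m$ down to their residues modulo $2^t$ gives sets $X', Y' \subseteq [2^t]$; I would need to check that after the constant-fraction losses from the block decomposition and from collisions of residues, we still have $|X'|, |Y'| \geq (\delta/5) 2^t$ — this is precisely why the expander is taken at parameter $\delta/5$ rather than $\delta$ (one factor from $r$ vs.\ $2^t$, one from the fraction of $X$, $Y$ in a single block, with a little slack). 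Then the $\delta/5$-bipartite expander property produces an edge $(x', y')$ between $X'$ and $Y'$, which lifts to an edge between $X$ and $Y$ in $G(\delta, N)$.

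The main obstacle I anticipate is the bookkeeping in the block decomposition: verifying that one can always choose the scale $t$ and the pair of blocks $I_{m'}, I_m$ so that simultaneously (i) $2^t = \Theta(r)$ with the right constants, (ii) $m - m' \leq 10$ so the edges actually exist in the construction, (iii) both $X$ and $Y$ retain a $(\delta/5)$-fraction of a full block after projecting to residues, and (iv) the direction is right ($m > m'$, i.e.\ edges go from $A$-side to $B$-side). Getting all the constants ($4n$ for the direct edges, the factor $5$, the bound $10$, the relation $2^{t-1} < r \leq 2^t$) to line up is the delicate part; conceptually, though, once the right block and scale are identified the theorem follows immediately from \lemref{drop} / the Corollary via a single application of the $\delta/5$-bipartite expander guarantee. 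Since this theorem is attributed to \cite{EGS75} and our construction differs only in swapping in the explicit $\mathsf{GetParentsBE}$, I would present the argument at the level of detail needed to confirm that the explicit bipartite expanders plug in correctly, citing \cite{EGS75} for the combinatorial core.
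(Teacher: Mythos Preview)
The paper does not actually prove this theorem; it is stated with attribution to \cite{EGS75} and no proof is given beyond the one-sentence remark that ``Erd\"os, Graham, and Szemeredi \cite{EGS75} showed that the graph $G(\delta, N)$ is a $\delta$-local expander as long as the underlying bipartite graphs are $\delta/5$-bipartite expanders.'' Your plan to sketch the EGS argument at the level needed to verify that the explicit $\mathsf{GetParentsBE}$ plugs in correctly, while citing \cite{EGS75} for the combinatorial core, is therefore exactly in line with the paper's treatment --- indeed it goes further than the paper, which gives no details at all.
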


 \thmref{LocalExpandersAreDepthRobust} says that any $\delta$-local expander is also \deleted{depth-robust }$(e, d=N-e\frac{1+\gamma}{1-\gamma})$-depth-robust for any constant $\gamma > 2 \delta$. The statement of \thmref{LocalExpandersAreDepthRobust} is implicit in the analysis of Alwen et al. \cite{EC:AlwBloPie18}. We include the proof for completeness.

\begin{theorem} \thmlab{LocalExpandersAreDepthRobust}
    Let $0 < \delta < 1/4$ be a constant and let $\gamma > 2 \delta$. Any $\delta$-local expander on $N$ nodes is $(e, d=N-e\frac{1+\gamma}{1-\gamma})$-depth-robust for any $e \leq N$.
\end{theorem}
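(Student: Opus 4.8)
The plan is to fix an arbitrary $S\subseteq[N]$ with $|S|=e$ (if $|S|<e$ the conclusion only gets easier) and to exhibit a directed path of length at least $N-e\frac{1+\gamma}{1-\gamma}$ in $G-S$. First I would record two consequences of the $\delta$-local expander property. \emph{(i) Short edges are present.} Applying the definition to a window whose halves $A,B$ have size $r$ with singletons $X=\{x\}\subseteq A$, $Y=\{y\}\subseteq B$ (legitimate whenever $1\ge\delta r$, i.e.\ $r\le\lfloor 1/\delta\rfloor$) shows that $(x,y)\in E$ for every $x<y$ with $y-x\le 2\lfloor 1/\delta\rfloor-1$; since $\delta<1/4$ this span is at least $7$, and in particular every edge $(j,j+1)$ is present. \emph{(ii) Transport across a window.} For a window with consecutive halves $A,B$ of size $r$: if $Z\subseteq A\setminus S$ has $|Z|\ge\delta r$ and every node of $Z$ is the endpoint of a directed path in $G-S$, then fewer than $\delta r$ survivors of $B$ fail to receive an edge from $Z$; so all but $<\delta r$ survivors of $B$ become endpoints of directed paths in $G-S$ that are one edge longer. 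This is the engine for ``jumping over'' an $S$-heavy stretch.

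Next I would isolate the $S$-heavy regions with a one-dimensional covering (rising-sun) argument at threshold $\gamma$: the maximal intervals $J$ with $|S\cap J|\ge(1-\gamma)|J|$ form a disjoint family $J_1,\dots,J_t$, and by maximality each $J_k$ itself satisfies $|S\cap J_k|\ge(1-\gamma)|J_k|$, so $|J_k|\le|S\cap J_k|/(1-\gamma)$ and the survivors lying inside $\bigcup_k J_k$ number at most $\frac{\gamma}{1-\gamma}\sum_k|S\cap J_k|\le\frac{\gamma e}{1-\gamma}$. A length-one interval $\{s\}$ with $s\in S$ is $(1-\gamma)$-dense, so $\bigcup_k J_k\supseteq S$; hence $[N]\setminus\bigcup_k J_k$ is a disjoint union of intervals $P_0,P_1,\dots,P_t$ of consecutive \emph{survivors}, each of which, by (i), already carries the obvious directed path of $G-S$ through all of its nodes.

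Then I would stitch the $P_i$ together by jumping across $J_1,\dots,J_t$ from left to right. To jump a block $J_k$ I take a window with halves of size $r\approx|J_k|/(1-\delta)$ whose left half lies in $P_{k-1}$ and whose right half covers $J_k$ together with at least $\delta r$ survivors at the start of $P_k$; taking $Z$ to be the rightmost $\lceil\delta r\rceil$ survivors of $P_{k-1}$ (which are active path-endpoints by induction) and aiming at the leftmost $\lceil\delta r\rceil$ survivors of $P_k$, property (ii) yields an edge from within the last $\delta r$ nodes of $P_{k-1}$ into the first $\delta r$ nodes of $P_k$, splicing the two segment-paths. The concatenated path then omits only the $e$ nodes of $S$, the $\le\frac{\gamma e}{1-\gamma}$ survivors inside $\bigcup_k J_k$, and, at each splice, the $<2\delta r=O(\delta|J_k|/(1-\delta))$ survivors of $P_{k-1}\cup P_k$ that are skipped near the splice; since $\sum_k|J_k|\le e/(1-\gamma)$ and $\gamma>2\delta$, these contributions sum to at most $e\frac{1+\gamma}{1-\gamma}$, which is the claimed bound on the path length.

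I expect the bookkeeping in the last step to be the main obstacle. One has to make sure the jump window's left half fits inside $P_{k-1}$ while still holding $\ge\delta r$ active survivors, which forces an a priori lower bound on $|P_{k-1}|$ in terms of $|J_k|$; a segment $P_i$ too short to jump out of must instead be merged into the adjacent block and its few survivors charged to the budget (and one must check this merging does not cascade badly). One also has to count the survivors skipped at each splice tightly enough that the aggregate excess over $e$ is exactly $e\frac{2\gamma}{1-\gamma}$ rather than a larger constant multiple — and it is here that the hypothesis $\gamma>2\delta$ is used, so that the covering threshold outpaces the $\delta$-rate loss coming from property (ii). A cleaner equivalent packaging, likely the one in the paper, is: call $v$ \emph{good} if $v\notin S$ and no interval containing $v$ is $(1-\gamma)$-dense; the covering bound gives at least $N-e\frac{1+\gamma}{1-\gamma}$ good nodes; show that consecutive good nodes can be joined by a path in $G-S$ (a short edge when they are close, an expander jump over the intervening dense block otherwise, dropping a few good nodes next to blocks that are long relative to the adjacent segment), and concatenate.
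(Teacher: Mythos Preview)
Your ``cleaner packaging'' in the last paragraph is exactly the paper's proof. It defines $x$ to be \emph{$\gamma$-good} under $S$ if for every $r\ge1$ the one-sided intervals $\{x{-}r{+}1,\dots,x\}$ and $\{x{+}1,\dots,x{+}r\}$ each retain at least a $\gamma$-fraction of survivors, and then simply quotes two lemmas from \cite{EC:AlwBloPie18} as black boxes: (a) at least $N-e\frac{1+\gamma}{1-\gamma}$ nodes are $\gamma$-good, and (b) any two $\gamma$-good nodes $x<y$ are joined by a directed path in $G-S$ provided $\delta<\min(\gamma/2,1/4)$. Concatenating the paths between consecutive good nodes finishes the argument; the paper contributes nothing beyond this assembly. (A small wrinkle in your counting: maximal two-sided $(1{-}\gamma)$-dense intervals need not be pairwise disjoint, since the union of two overlapping such intervals need not itself be $(1{-}\gamma)$-dense. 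Running a one-sided rising-sun for each direction separately and combining by inclusion--exclusion, using that every $s\in S$ is bad from both sides, is what produces the factor $\frac{1+\gamma}{1-\gamma}$ rather than your $\frac{1}{1-\gamma}$.)

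The gap you flag in your connectivity step is real and is exactly where your main plan diverges from the cited argument. A single expander-window jump across $J_k$ needs the flanking segment $P_{k-1}$ to host the whole left half, forcing $|P_{k-1}|\gtrsim|J_k|$; merging a short $P_{k-1}$ into the block can cascade, and even when it does not, the per-splice losses do not sum to exactly $e\frac{2\gamma}{1-\gamma}$ under the hypothesis $\gamma>2\delta$ alone. The proof of lemma~(b) in \cite{EGS75,EC:AlwBloPie18} sidesteps all of this by anchoring the windows at the good node $x$ itself, not inside a survivor segment, and growing the reachable set outward at doubling scales $r=1,2,4,\dots$. At each scale one applies the $\delta$-local-expander property to halves $(x,x{+}r]$ and $(x{+}r,x{+}2r]$ with $Z$ the already-reachable survivors in the left half, while $\gamma$-goodness of $x$ supplies the required survivor mass at every scale (this is where $\gamma>2\delta$ is used). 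The mirror argument run backward from $y$ gives a backward-reachable set, and once the scales overlap the two reachable sets are each too large to miss one another. No survivor segment is ever required to host a window, so there is no cascade and no good nodes are dropped.
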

\begin{proof}
Let $G$ be a $\delta$-local expander with $\delta < 1/4$ and $\gamma> 2\delta$ and \deleted{consider and }let $S\subseteq [N]$ denote an arbitrary subset of size $|S|=e$. To show that $G-S$ has a path of length $d=N-e\frac{1+\gamma}{1-\gamma}$ we rely on two lemmas  (\lemref{gamma-good}, \lemref{depth-robust}) due to Alwen et al. \cite{EC:AlwBloPie18}. We first introduce the notion of a $\gamma$-good node. A node $x \in [N]$ is $\gamma$-good under a subset $S \subseteq [N]$ if for all $r > 0$ we have $|I_r(x)\backslash S| \geq \gamma |I_r(x)|$ and $|I_r^*(x)\backslash S| \geq \gamma |I_r^*(x)|$, where $I_r(x)=\{x-r-1, ..., x \}$ and $I_r^*(x)=\{x+1, ..., x+r \}$.

\begin{lemma}[\cite{EC:AlwBloPie18,EGS75}]\lemlab{gamma-good}
    Let $G=(V = [N], E)$ be a $\delta$-local expander and let $x<y\in [N]$ both be $\gamma$-good under $S\subseteq [N]$ then if $\delta < \min(\gamma/2, 1/4)$ then there is a directed path from node $x$ to node $y$ in $G-S$.
\end{lemma}
\begin{lemma}[\cite{EC:AlwBloPie18}]\lemlab{depth-robust}
    For any DAG $G=([N], E)$ and any subset $S\subseteq [N]$ of nodes at least $N-\abs{S}\frac{1+\gamma}{1-\gamma}$ of the remaining nodes in G are $\gamma$-good with respect to $S$.
\end{lemma}
Applying \lemref{depth-robust} at least  $d=N-e\frac{1+\gamma}{1-\gamma}$ nodes $v_1,\ldots, v_d$ \replaced{are}{which are all} $\gamma$-good with respect to $S$. Without loss of generality, we can assume that $v_1 < v_2 < \ldots < v_d$. Applying \lemref{gamma-good} for each $i \leq d$, there is a directed path from $v_i$ to $v_{i+1}$ in $G-S$. Concatenating all of these paths we obtain one long directed path containing all of the nodes $v_1,\ldots, v_d$. Thus, $G-S$ contains a directed path of length $d=N-e\frac{1+\gamma}{1-\gamma}$.
\end{proof}

As an immediate corollary of \thmref{EGSIsLocalExpander}  and  \thmref{LocalExpandersAreDepthRobust} we have

\begin{corollary} \corlab{DR}
    Let $0 < \delta < 1/4$ be a constant and let $\gamma > 2 \delta$ then the graph $G(\delta,N)$ is $(e, d=N-e\frac{1+\gamma}{1-\gamma})$-depth-robust for any $e \leq N$.
\end{corollary}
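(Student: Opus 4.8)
The plan is to obtain the corollary by directly composing the two theorems proved just above; no new argument is needed. \emph{Step 1.} First I would observe that the graph $G(\delta,N)$ produced by $\GetParentsEGS(\delta,\cdot,N)$ uses, at every scale $t$, the bipartite graph computed by $\mathsf{GetParentsBE}(2^t, L_{\delta/5}, x+1)$, and that by \thmref{amplify} together with \lemref{drop} (equivalently, the corollary concluding Section~3) this bipartite graph is a $\delta/5$-bipartite expander with $2^t$ nodes on each side. Hence the hypothesis of \thmref{EGSIsLocalExpander} is satisfied, and we conclude that $G(\delta,N)$ is a $\delta$-local expander.

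\emph{Step 2.} Next I would feed this into \thmref{LocalExpandersAreDepthRobust}. Its hypotheses are exactly $0<\delta<1/4$ and $\gamma>2\delta$, both of which are assumed in the statement of the corollary. Applying it to the $\delta$-local expander $G(\delta,N)$ therefore yields that $G(\delta,N)$ is $(e,\,d=N-e\frac{1+\gamma}{1-\gamma})$-depth-robust for every $e\le N$, which is precisely the claim.

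\emph{Main obstacle.} There is essentially no obstacle here: all of the analytic content resides in \thmref{amplify}, \thmref{EGSIsLocalExpander}, and \thmref{LocalExpandersAreDepthRobust}, all of which may be assumed. The only point to verify carefully is the bookkeeping of constants --- that the $\delta$ threaded through $\GetParentsEGS$ is the same as the local-expansion parameter claimed for $G(\delta,N)$, and that the internal bipartite-expander parameter $\delta/5$ is exactly what \thmref{EGSIsLocalExpander} demands --- and a quick inspection of \algref{getparentsegs} confirms there is no mismatch.
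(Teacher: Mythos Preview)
Your proposal is correct and matches the paper's approach exactly: the paper presents this corollary as an immediate consequence of \thmref{EGSIsLocalExpander} and \thmref{LocalExpandersAreDepthRobust}, with no additional argument. Your Step~1 unpacks slightly more than necessary (since \thmref{EGSIsLocalExpander} already asserts directly that $G(\delta,N)$ is a $\delta$-local expander, without requiring you to re-verify the $\delta/5$-bipartite-expander hypothesis), but this is harmless extra care rather than a deviation.
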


\subsection{Explicit Extreme Depth-Robust Graphs}


We also obtain explicit constructions of $\epsilon$-extreme depth-robust graphs which have found applications in constructing Proofs of Space and Replication \cite{ITCS:Pietrzak19a}, Proofs of Sequential Work~\cite{ITCS:MahMorVad13}, and in constructions of Memory-Hard Functions~\cite{EC:AlwBloPie18}.

\begin{definition} [\cite{EC:AlwBloPie18}] \deflab{extreme-dr}
    For any constant $\epsilon > 0$, a DAG $G$ with $N$ nodes is \emph{$\epsilon$-extreme depth-robust} if and only if $G$ is $(e,d)$-depth-robust for any $e+d \leq (1-\epsilon)N$.
\end{definition}

When we set $\delta_\epsilon$ appropriately the graph $G(\delta_\epsilon, N=2^n)$ is $\epsilon$-extremely depth robust. 

\begin{corollary}\corlab{extreme-dr}
Given any constant $\epsilon > 0$ we define $\delta_\epsilon$ to be the unique value such that $1+\epsilon = \frac{1+2.1\delta_\epsilon}{1-2.1\delta_\epsilon} $ if $\epsilon \leq 1/3$ and $\delta_{\epsilon} = \delta_{1/3}$ for $\epsilon > 1/3$. For any integer $n \geq 1$ the graph $G(\delta_\epsilon, N=2^n)$ is  $\epsilon$-extreme depth robust.
\end{corollary}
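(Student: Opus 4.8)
The plan is to reduce the statement directly to \corref{DR}, which already guarantees that $G(\delta,N)$ is $(e,\,N-e\frac{1+\gamma}{1-\gamma})$-depth-robust for every $e\leq N$ and every constant $\gamma>2\delta$, provided $0<\delta<1/4$. The only work is to pick $\gamma$ as a function of $\delta_\epsilon$ so that this depth guarantee covers the whole extreme regime $e+d\leq(1-\epsilon)N$, and to verify the side conditions needed to invoke the corollary. Throughout I will use the elementary monotonicity of depth-robustness in its second parameter: if $G-S$ contains a path of length $d'\geq d$, it contains one of length $d$.

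First I would treat the case $\epsilon\leq 1/3$. Set $\gamma := 2.1\,\delta_\epsilon$; since $\delta_\epsilon>0$ we have $\gamma>2\delta_\epsilon$ as required by \corref{DR}. The map $\delta\mapsto\frac{1+2.1\delta}{1-2.1\delta}$ is strictly increasing on $[0,1/2.1)$ with $\frac{1+2.1\delta}{1-2.1\delta}=4/3$ at $\delta=1/14.7<1/4$, so the defining equation $1+\epsilon=\frac{1+2.1\delta_\epsilon}{1-2.1\delta_\epsilon}$ has a unique solution $\delta_\epsilon\in(0,\delta_{1/3}]$, and in particular $\delta_\epsilon\leq\delta_{1/3}<1/4$ and $\gamma=2.1\delta_\epsilon<1$. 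Hence \corref{DR} applies and, using $\frac{1+\gamma}{1-\gamma}=\frac{1+2.1\delta_\epsilon}{1-2.1\delta_\epsilon}=1+\epsilon$, it yields that $G(\delta_\epsilon,N)$ is $(e,\,N-e(1+\epsilon))$-depth-robust for every $e\leq N$. Now fix any $e,d\geq 0$ with $e+d\leq(1-\epsilon)N$; then $e\leq N$, and
\[ d\leq (1-\epsilon)N-e = N-e-\epsilon N \leq N-e-\epsilon e = N-e(1+\epsilon), \]
so by monotonicity $G(\delta_\epsilon,N)$ is $(e,d)$-depth-robust. Since the pair $(e,d)$ was arbitrary in the regime, $G(\delta_\epsilon,N)$ is $\epsilon$-extreme depth-robust by \defref{extreme-dr}.

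For $\epsilon>1/3$ I would simply observe that $\delta_\epsilon=\delta_{1/3}$ and that the previous paragraph already shows $G(\delta_{1/3},N)$ is $(1/3)$-extreme depth-robust. Any pair $(e,d)$ with $e+d\leq(1-\epsilon)N<(2/3)N$ therefore satisfies the $(1/3)$-extreme constraint, so $G(\delta_\epsilon,N)$ is $(e,d)$-depth-robust, hence $\epsilon$-extreme depth-robust.

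There is essentially no hard step here; the core of the argument is a one-line algebraic substitution once $\gamma=2.1\delta_\epsilon$ is chosen. The only point requiring care is that the slack constant $2.1>2$ is matched to the defining equation for $\delta_\epsilon$, so that simultaneously $\gamma>2\delta_\epsilon$ holds strictly (needed to apply \corref{DR}) and $\frac{1+\gamma}{1-\gamma}=1+\epsilon$ holds exactly; and that the remaining hypotheses $\delta_\epsilon<1/4$ and $2.1\delta_\epsilon<1$ follow from the monotonicity of $\delta\mapsto\frac{1+2.1\delta}{1-2.1\delta}$ together with the $\epsilon\leq 1/3$ cutoff in the definition of $\delta_\epsilon$.
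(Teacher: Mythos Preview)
Your proposal is correct and follows essentially the same approach as the paper: set $\gamma=2.1\,\delta_\epsilon$, invoke \corref{DR}, and use $\frac{1+\gamma}{1-\gamma}=1+\epsilon$. If anything, your write-up is more complete than the paper's, since you explicitly verify the implication from $(e,\,N-e(1+\epsilon))$-depth-robustness to $\epsilon$-extreme depth-robustness and you handle the $\epsilon>1/3$ case separately, whereas the paper's proof leaves both of these steps implicit.
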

\begin{proof}
Set $\gamma=2.1  \delta_\epsilon$ and observe that $\delta_{1/3} \leq 0.07 \leq 1/4$ and for $\epsilon < 1/3$ we have $\delta_{\epsilon} \leq \delta_{1/3} \leq 1/4$ so we can apply \corref{DR} to see that $G(\delta_\epsilon, N=2^n)$  is $(e,d=N-e\frac{1+2.1 \delta_\epsilon}{1 - 2.1 \delta_\epsilon})$-depth robust for any $e \leq N$. Since  $\frac{1+2.1 \delta_\epsilon}{1 - 2.1 \delta_\epsilon} = (1+\epsilon)$ it follows that the graph is $\epsilon$-extreme depth robust.
\end{proof}

\subsection{Depth-Robust Graphs with Constant Indegree}
In some applications it is desirable to ensure that our depth-robust graphs have constant indegree. We observe that we can apply a result of Alwen et al.~\cite{EC:AlwBloPie17} to transform the DAG $G(\delta, N) = (V=[N], E(\delta,N))$ with maximum indegree $\beta=\beta_{\delta,N}$ into a new DAG $H_{\delta, N} = ([N] \times [\beta], E'(\delta, N))$ with $N'= 2N\beta$ nodes and maximum indegree $2$. Intuitively, the transformation reduces the indegree by replacing every node $v \in [N]$ from $G(\delta, N)$ with a path of $2 \beta$ nodes $(v,1),\ldots, (v,2\beta)$ and distributing the incoming edges accross this path. In particular, if $v$ has incoming edges from nodes $v_1,\ldots, v_{\beta}$ in $G(\delta, N)$ then for each $i \leq \beta$ we will add an edge from the node $(v_i,2\beta)$ to the node $(v, i)$. This ensures that each node $(v,i)$ has at most two incoming edges. Formally, the algorithm $\mathsf{GetParentsLowIndeg}(\delta, \replaced{v', N}{N, \cdot})$ takes as input a node $v'=(v,i)$ and (1) initializes $P'=\{(v,i-1)\added{\}}$ if $i > 1$,  $P'=\{(v-1,2\beta)\}$ if $i=1$ and $v>1$ and $P'=\{\}$ otherwise, (2) computes $P = \mathsf{GetParentsEGS}(\delta, \replaced{v,N}{N, v})$, (3) sets $u = P[i]$ to be the $i$th node in the set $P$, and (4) returns $P' \cup \{(u,2\beta)\}$. It is easy to verify that the algorithm $\mathsf{GetParentsLowIndeg}$ runs in time $\polylog N$.  

\begin{corollary} 
    Let $0 < \delta < 1/4$ be a constant and let $\gamma > 2 \delta$ then the graph $H_{\delta, N}$ is $(e, d=N\beta -e\beta \frac{1+\gamma}{1-\gamma})$ depth-robust for any $e \leq N$. 
\end{corollary}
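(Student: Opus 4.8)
The plan is to deduce this from \corref{DR} by the standard ``lifting'' argument for the indegree-reduction gadget of Alwen et al.~\cite{EC:AlwBloPie17}. Write $\beta=\indeg(G(\delta,N))=O(\log N)$ and recall that $H_{\delta,N}$ replaces each node $v\in[N]$ by the directed path $(v,1)\to(v,2)\to\cdots\to(v,2\beta)$ and, for every parent $u=P[i]$ of $v$ (where $P=\mathsf{GetParentsEGS}(\delta,v,N)$), adds the edge $\bigl((u,2\beta),(v,i)\bigr)$. Fix any $S'\subseteq[N]\times[2\beta]$ with $|S'|=e$ and let $S\doteq\{v\in[N]:(\{v\}\times[2\beta])\cap S'\neq\emptyset\}$ be its projection onto the super-nodes, so that $|S|\leq|S'|=e$. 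If $d\le 0$ the statement is vacuous, so we may assume $d>0$.

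First I would invoke \corref{DR}: since $|S|\le e$, the graph $G(\delta,N)-S$ contains a directed path $v_0,v_1,\ldots,v_{d_0}$ with $d_0\ge N-e\frac{1+\gamma}{1-\gamma}$; the $v_j$ appear in increasing order of index, since every edge of $G(\delta,N)$ points from a lower- to a higher-indexed node. Because each $v_j\notin S$, the whole sub-path $(v_j,1)\to\cdots\to(v_j,2\beta)$ survives in $H_{\delta,N}-S'$. Next I would stitch these pieces together: for each $j\ge1$, consecutiveness of $v_{j-1},v_j$ on the path means $v_{j-1}\in\mathsf{GetParentsEGS}(\delta,v_j,N)$, so $v_{j-1}=P[i_j]$ for some index $1\le i_j\le|\mathsf{GetParentsEGS}(\delta,v_j,N)|\le\beta$, and hence $H_{\delta,N}$ contains the edge $\bigl((v_{j-1},2\beta),(v_j,i_j)\bigr)$, whose endpoints both avoid $S'$. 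Concatenating yields the directed path
\[
(v_0,1)\to\cdots\to(v_0,2\beta)\to(v_1,i_1)\to\cdots\to(v_1,2\beta)\to(v_2,i_2)\to\cdots\to(v_{d_0},2\beta)
\]
in $H_{\delta,N}-S'$; it is simple because the super-nodes $v_0,\ldots,v_{d_0}$ are distinct and each is traversed in a single contiguous run of indices.

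Finally I would count nodes: the first super-node contributes $2\beta$ nodes, and each of the remaining $d_0$ super-nodes contributes $2\beta-i_j+1\ge\beta+1$ nodes (using $i_j\le\beta$), for a total of at least $2\beta+d_0(\beta+1)=\beta d_0+(2\beta+d_0)$ nodes, i.e.\ a directed path of length at least $\beta d_0\ge N\beta-e\beta\frac{1+\gamma}{1-\gamma}=d$. Therefore $H_{\delta,N}$ is $(e,d)$-depth-robust. I do not expect a substantive obstacle here: the argument is purely combinatorial bookkeeping, and the only points that need care are (i) verifying that the parent index $i_j$ is at most $\beta$, so that each super-node past the first contributes at least $\beta+1$ nodes; (ii) checking that the lifted path is simple; and (iii) the harmless node-versus-edge length discrepancy, which is absorbed by the $\Theta(\beta+d_0)$ slack in the count.
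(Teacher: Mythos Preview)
Your proposal is correct and follows essentially the same approach as the paper. The paper's proof is only a sketch that invokes as a black box the result of Alwen et al.~\cite{EC:AlwBloPie17}---that the indegree-reduction gadget turns any $(e,d)$-depth-robust graph of indegree $\beta$ into an $(e,d\beta)$-depth-robust graph---and then applies \corref{DR}; you have simply unpacked that cited lemma with the standard lifting argument, and your bookkeeping (projecting $S'$ to super-nodes, stitching at indices $i_j\le\beta$, and the node count) is sound.
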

\begin{proof}(Sketch) Alwen et al.~\cite{EC:AlwBloPie17} showed that applying the indegree reduction procedure above to any $(e,d)$-depth-robust graph with maximum indegree $\beta$ yields a $(e,d\beta)$-depth-robust graph. The claim now follows directly from \thmref{EGSIsLocalExpander}  and  \thmref{LocalExpandersAreDepthRobust}. \end{proof}

\section{Conclusion}

We give the first explicit construction of $\epsilon$-extreme depth-robust graphs $G=(V=[N],E)$ with indegree $O(\log N)$ which are locally navigable. Applying an indegree reduction gadget of Alwen et al.~\cite{EC:AlwBloPie17} we also obtain the first explicit and locally navigable construction of $\left( \Omega(N/\log N), \Omega(N)\right)$-depth-robust graphs with constant indegree. Our current constructions are primarily of theoretical interest and we stress that we make no claims about the practicality of the constructions as the constants hidden by the asymptotic notation are large. Finding explicit and locally navigable constructions of $(c_1 N/\log N, c_2 N)$-depth-robust graphs with small indegree for reasonably large constants $c_1, c_2 > 0$ is an interesting and open research challenge. Similarly, finding explicit and locally navigable constructions of  $\epsilon$-extreme depth-robust graphs $G=(V=[N],E)$ with indegree $c_\epsilon \log N$ for smaller constants $c_{\epsilon}$ remains an important open challenge.

\def\shortbib{0}
\bibliographystyle{alpha}
\bibliography{references,abbrev3,crypto}

\end{document}